\documentclass[manyauthors]{fundam}
\usepackage{url} 
\usepackage[ruled,lined]{algorithm2e}
\usepackage{graphicx}



\begin{document}

\setcounter{page}{357}
\publyear{22}
\papernumber{2114}
\volume{185}
\issue{4}

  \finalVersionForARXIV

\title{A Local Diagnosis Algorithm for Hypercube-like Networks \\ under the BGM Diagnosis Model}

\author{Cheng-Kuan Lin\\
Department of Computer Science\\
National Yang Ming Chiao Tung University \\
Hsinchu City, Taiwan 30010, R.O.C.
\and Tzu-Liang Kung\\
Department of Computer Science and \\
Information Engineering, Asia University \\
Taichung City, Taiwan 413, R.O.C.
\and Chun-Nan Hung \\
Department of Information Management \\
Da-Yeh University\\
 Changhua, Taiwan 51591, R.O.C.
\and Yuan-Hsiang Teng\thanks{Address for correspondence: Department of Computer
             Science and Information Engineering, Providence University, Taichung City, Taiwan 433, R.O.C. \newline \newline
          \vspace*{-6mm}{\scriptsize{Received July 2021; \ accepted April 2022.}}}
\\
Department of Computer Science and\\
 Information Engineering, Providence University \\
Taichung City, Taiwan 433, R.O.C.\\
yhteng@pu.edu.tw
}

\maketitle

\runninghead{C.K. Lin et al.}{A Diagnosis Algorithm for Hypercube-like Networks under BGM Model}

\begin{abstract}
System diagnosis is process of identifying faulty nodes in a system. An efficient diagnosis is crucial for a multiprocessor system. The BGM diagnosis model is a modification of the PMC diagnosis model, which is a test-based diagnosis. In this paper, we present a specific structure and propose an algorithm for diagnosing a node in a system under the BGM model. We also give a polynomial-time algorithm that a node in a hypercube-like network can be diagnosed correctly in three test rounds under the BGM diagnosis model.
\end{abstract}

\begin{keywords}
system diagnosis, diagnosability, local diagnosis, BGM model, hypercube-like network
\end{keywords}

\section{Introduction}\label{sec1}

Sensor networks are being increasingly used in computer technology. Among various communication networks, sensor networks have been widely used in many fields, such as military, environmental monitoring, smart agriculture, healthcare, traffic systems, etc. Sensors in the network are usually created by incorporating sensing materials with integrated circuits. Sensor networks can be established by enhancing each sensor node with wireless communication capabilities and networking the sensor nodes together \cite{Intanagonwiwat2000,Kung2016}. Since sensor networks essentially have no infrastructure, it is more feasible to manage and update the topology of certain substructure rather than the entire network. The substructure, for example, may be a ring, a path, a tree, a mesh, etc. Sensor networks have multiple sensor nodes or processors. In general, sensor nodes, processors, and links are modeled as a graph topology. Even slight malfunctions can render the service ineffective; thus, system reliability is a crucial parameter that must be considered when designing sensor networks. To ensure system reliability, faulty devices should be replaced with fault-free ones immediately. The fault here means that the device has a calculation error or a problem with the sensing information. At first glance, the device can operate normally, but there is a problem with the calculation result or the sensing information. For example, the communication of a sensor node is normal, but the sensing data may be abnormal. This situation is quite common in wireless sensor networks, but it may also happen in multi-processing systems. Hence the faulty device we discussed is not that the device is completely inoperable, or that the communication system is broken such that the device cannot exchange information with other devices in the network. The fault type in this article is a partial fault. That is, the communication is normal, but the calculation or sensing data is abnormal. {\it System diagnosis} refers to the process of identifying faulty devices. The maximum number of faulty devices that can be identified accurately is called the {\it diagnosability} of a system. If all the faulty devices in a system can be detected precisely and the maximum number of faulty devices is $t$, then the system is {\it $t$-diagnosable}. Many studies on system diagnosis and diagnosability have been reported \cite{Chang2014,Lin2015,Teng2014,Li2017,Zhu19,Lin2020}.

\medskip
Barsi, Grandoni, and Maestrini proposed the BGM diagnosis model in \cite{BGM76}. The BGM model is a test-based diagnosis and a modification of the PMC diagnosis model presented by Preparata et al. \cite{PMC5}. Under the BGM model, a processor diagnoses a system by testing the neighboring processors via the links between them. A few related studies have investigated the BGM model \cite{Albini2004,Blough1999,Vedeshenkov2002}. In this paper, we propose a sufficient and necessary characterization of a $t$-diagnosable system under the BGM diagnosis model. We present a specific structure named the $t$-diagnosis-tree, and propose an algorithm for diagnosing nodes in a system. With our algorithm, the faulty or fault-free status of a node can be identified accurately if the total number of faulty vertices does not exceed $t$, the connectivity of the system. We also discuss the conditional local diagnosis of the BGM model and present an algorithm for diagnosing a system with a conditionally faulty set.

\medskip
The hypercube is one of the commonest topologies in all interconnection networks appeared in the literature \cite{Leighton92}. The properties of the hypercube have been studied for many years. The hypercube has remained eye-catching to this day. By twisting certain pairs of links in the hypercube, many different network structures are presented \cite{Abraham91,Cull95,Efe91,Efe92}. To make a unified study of these variants, Vaidya et al. introduced the class of hypercube-like graphs in \cite{Vaidya93}. The hypercube-like networks, consisting of simple, connected, and undirected graphs, contain most of the hypercube variants. In this paper, we prove that the nodes in an $n$-dimensional hypercube-like network $XQ_n$ can be diagnosed correctly with a faulty set $F$ in three test rounds under the BGM diagnosis model if $|F|\leq n$. The remainder of this paper is organized as follows. In Section \ref{sec2}, we introduce the BGM diagnosis model. In Section \ref{sec3}, we give some properties about the $n$-dimensional hypercube-like graphs $XQ_n$. In Section \ref{sec4}, we propose a specific structure for local diagnosis, and present a local diagnosis algorithm for the BGM diagnosis model. In Section \ref{sec5}, we propose an algorithm for conditional local diagnosis under the BGM model. We give a 3-round local diagnosis algorithm for the hypercube-like network under the BGM diagnosis model in Section \ref{sec6}. Finally, Section \ref{sec7} presents our conclusions.

\section{The BGM diagnosis model}\label{sec2}


We use an interconnection network to represent the layout of processors and links in a high-speed multiprocessor system. An interconnection network is typically modeled as an undirected graph in which the nodes represent processors and the edges represent the communication links between the processors. For graph definitions and notations, we follow \cite{HsuBook}. Let $G=(V,E)$ be a {\it graph} if $V$ is a finite set and $E$ is a subset of \{$\{u,v\} \mid \{u,v\}$ is an unordered pair of $V$\}. We define $V$ as the {\it node set} and $E$ as the {\it edge set} of $G$. Two nodes $u$ and $v$ are {\it adjacent} if $\{u,v\}\in E$; we define $u$ as a {\it neighbor} of $v$, and vice versa. We use $N_G(u)$ to represent the neighborhood set $\{v \mid \{u,v\} \in E(G)\}$. The {\it degree} of a node $v$ in a graph $G$, represented as $\textrm{deg}_{G}(v)$, is the number of edges incident to $v$.

\medskip\smallskip
For the specialized terms of the BGM diagnosis model, we follow \cite{BGM76}. We assume that adjacent processors can perform tests on each other under this model. Let $G=(V,E)$ denote the underlying topology of a multiprocessor system. For any two adjacent nodes $u,v\in V(G)$, the ordered pair $(u,v)$ represents a {\it test} in which processor $u$ can diagnose processor $v$. In this situation, $u$ is a {\it tester} and $v$ is a {\it testee}. If $u$ evaluates $v$ as faulty, the result of the test $(u,v)$ is $1$; if $u$ evaluates $v$ as fault-free, the result of the test $(u,v)$ is $0$. Because we consider the faults permanent, the result of a test is {\it reliable} if and only if the tester is fault-free. To make a system with a complex structure more realistic, we assume that every processor has computational ability. Therefore, any completed test for a given set of faults in a processor consists of a sequence of numerous stimuli. Based on the observation, it is reasonable to suppose that, between real and expected reaction to the stimuli, at least one mismatch will occur as long as the tested processor is faulty, even if the testing processor is faulty. Following this discussion, the diagnostic model for a system $G$ is defined as follows. Assume that $x$ and $y$ are two adjacent processors in $G$. If $x$ is fault-free, the result of the test $(x,y)$ is $0$ if $y$ is fault-free and $1$ if $y$ is faulty. If $x$ is faulty and $y$ is fault-free, both results of are possible. If $x$ and $y$ are faulty, the result of the test $(x,y)$ is $1$ (Table \ref{BGMtable}). According to the aforementioned definition, if a test result is $0$, the testee should definitely be fault-free. By contrast, if the test result is $1$, a fault exists in the tester, testee, or both.

\begin{table}[h]
\renewcommand{\arraystretch}{1.3}
 \caption{Results of $\sigma(x,y)$ and $\sigma(y,x)$ under the BGM diagnosis model.} \label{BGMtable}
 \centering
\begin{tabular}{|c|c|c|c|} \hline
Node $x$ & Node $y$ & $\sigma(x,y)$ & $\sigma(y,x)$ \\ \hline
Fault-free & Fault-free & 0 & 0\\ \hline
Fault-free & Faulty & 1 & 0 or 1\\ \hline
Faulty & Fault-free & 0 or 1 & 1\\ \hline
Faulty & Faulty & 1 & 1\\
\hline
\end{tabular}
\end{table}

\eject
A {\it test assignment} for system $G$ is a collection of tests that can be modeled as a directed graph $T=(V,L)$. Thus, $(u,v)\in L$ means that $u$ and $v$ are adjacent in $G$. The collection of all test results from the test assignment $T$ is called a {\it syndrome}. Formally, a syndrome of $T$ is a mapping $\sigma:L\rightarrow \{0,1\}$. A {\it faulty set} $F$ is the set of all faulty processors in $G$. Note that $F$ can be any subset of $V$. System diagnosis is the process of identifying faulty nodes in a system. The maximum number of faulty nodes that can be accurately identified in a system $G$ is called the {\it diagnosability} of $G$, denoted by $\tau(G)$. A system $G$ is {\it $t$-diagnosable} if all faulty nodes in $G$ can be precisely detected with the total number of faulty nodes being at most $t$. Let $\sigma$ denote the syndrome resulting from a test assignment $T=(V,L)$. A subset of nodes $F\subseteq V$ is considered {\it consistent} with $\sigma$ if for a $(u,v)\in L$ such that $u\in V-F$, $\sigma(u,v)=1$ if and only if $v\in F$. Let $\sigma(F)$ denote the set of all possible syndromes with which the faulty set $F$ can be consistent. Two distinct faulty sets $F_1$ and $F_2$ of $V$ are {\it distinguishable} if $\sigma(F_1)\cap\sigma(F_2)=\emptyset$; otherwise, $F_1$ and $F_2$ are {\it indistinguishable}. Thus, $(F_1,F_2)$ is a {\it distinguishable pair} of faulty sets if $\sigma(F_1)\cap\sigma(F_2)=\emptyset$; otherwise, $(F_1,F_2)$ is an {\it indistinguishable pair}. For any two distinct faulty sets $F_1$ and $F_2$ of $G$ with $|F_1|\leq t$ and $|F_2|\leq t$, a system $G$ is $t$-diagnosable if and only if $(F_1,F_2)$ is a distinguishable pair. Let $F_1$ and $F_2$ be two distinct sets. We use $F_1\triangle F_2$ to denote the symmetric difference $(F_1-F_2)\cup (F_2-F_1)$ between $F_1$ and $F_2$. Many researchers study the conventional diagnosability that describes the global status of a system under the random-fault model. Thus, Hsu and Tan proposed the concept of local diagnosability in \cite{Hsu2007}. The research about local diagnosability concerns with the local connective substructure in a system. Some related studies have been proposed in \cite{Chiang2012,Cheng2013,Teng2013,Wang2018}. Suppose that $\sigma_{F}$ is a syndrome produced by a set of faulty nodes $F\subseteq V$ containing $u$ with $|F|\leq t$. We consider $G$ {\it locally $t$-diagnosable} at $u$ if every faulty node set $F'$ compatible with $\sigma_{F}$ and $|F'|\leq t$ also contains $u$. The {\it local diagnosability} of $u$ is the maximum value of $t$ such that $G$ is locally $t$-diagnosable at $u$.

\section{The hypercube-like graphs}\label{sec3}

Let $G_0=(V_0,E_0)$ and $G_1=(V_1,E_1)$ be two disjoint graphs with the same number of nodes. A {\it 1-1 connection} between $G_0$ and $G_1$ is defined as an edge set $E=\{(v,\phi(v))\mid v\in V_0,\phi(v)\in V_1,$ and $\phi$ : $V_0\rightarrow V_1$ is a bijection $\}$. We use $G_0\oplus G_1$ to denote $G=(V_0\cup V_1, E_0\cup E_1\cup E)$. The operation "$\oplus$" may generate different graphs depending on the bijection $\phi$. There are some studies on the operation "$\oplus$". Let $G=G_0\oplus G_1$, and let $x$ be any node in $G$. We use $\bar{x}$ to denote the unique node matched under $\phi$.

\medskip
Now, we can define the set of $n$-dimensional hypercube-like graph $XQ_n$ as follows: \medskip

(1) $XQ_1=\{K_2\}$, where $K_2$ is the complete graph with two nodes.

(2) Assume that $G_0\in XQ_n$ and $G_1\in XQ_n$. Then $G=G_0\oplus G_1$ is a graph in $XQ_{n+1}$.

\medskip
Every graph in $XQ_n$ is an $n$-regular graph with $2^n$ nodes. Let $G$ be a graph in $XQ_{n+1}$. Then $G=G_0\oplus G_1$ with both $G_0$ and $G_1$ in $XQ_n$. Suppose that $u$ is a node in $V(G)$. Then $u$ is a node in $V(G_i)$ for some $i\in \{0,1\}$. We use $\bar{u}$ to denote the node in $V(G_{1-i})$ matched under $\phi$. The $1$-dimensional hypercube-like graph $XQ_1$ is a complete graph with two nodes and the edge is labeled by $1$. An $n$-dimensional hypercube-like graph $XQ_n$ can be generated by two $(n-1)$-dimensional hypercube-like graphs, denoted $XQ^{0}_{n-1}$ and $XQ^{1}_{n-1}$, and a perfect match between the nodes of $XQ^{0}_{n-1}$ and $XQ^{1}_{n-1}$, where every edge in this perfect match is labeled by $n$. The following is some properties about the $n$-dimensional hypercube-like graphs.

\begin{theorem}\label{neighborXQ}\cite{Fan05}
Let $n$ and $k$ be any two integers with $n\geq 3$ and $1\leq k\leq 2n-2$. For any node subset $V$ of $XQ_{n}$ with $|V|=k$, $|N_{XQ_{n}}(V)|\geq kn-\frac{k(k+1)}{2}+1$.
\end{theorem}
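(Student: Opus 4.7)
\medskip
\noindent\textbf{Proof plan.} I would prove this by induction on $n$, exploiting the recursive decomposition $XQ_n = XQ^{0}_{n-1}\oplus XQ^{1}_{n-1}$ built into the definition of the family.

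\medskip
For the base case $n=3$, every member of $XQ_3$ is a $3$-regular graph on $8$ vertices, and the range $1\leq k\leq 4$ contains only four instances. I would verify the required bounds $|N(V)|\geq 3,\,4,\,4,\,3$ for $k=1,2,3,4$ by a direct case analysis, using that in any $XQ_3$ two adjacent vertices have no common neighbor (since the two matched copies of $XQ_2$ are cycles and the perfect matching adds one extra edge per vertex).

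\medskip
For the inductive step, fix $n\geq 4$ and assume the estimate holds in $XQ_{n-1}$. Write $G=G_0\oplus G_1$ with $G_0,G_1\in XQ_{n-1}$, and let $\phi$ denote the matching bijection. For $V\subseteq V(G)$ with $|V|=k\leq 2n-2$, set $V_i=V\cap V(G_i)$ and $k_i=|V_i|$, so $k_0+k_1=k$ and, without loss of generality, $k_0\geq k_1$. The outside neighborhood splits as
\[
N_G(V)\ =\ \bigl[(N_{G_0}(V_0)\cup \phi(V_1))\setminus V_0\bigr]\ \cup\ \bigl[(N_{G_1}(V_1)\cup \phi(V_0))\setminus V_1\bigr],
\]
and these two pieces are disjoint (they lie in $V(G_0)$ and $V(G_1)$ respectively). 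I would bound each piece below using the induction hypothesis on the $G_i$-side whenever $1\leq k_i\leq 2(n-1)-2=2n-4$, and supplement with the observation that $\phi(V_0)$ and $\phi(V_1)$ are sets of $k_0$ and $k_1$ distinct vertices in the opposite half, contributing neighbors outside $V_{1-i}$ that are not already counted. A short calculation then shows the contributions aggregate to $k n - k(k+1)/2+1$, provided the counting absorbs the worst-case overlap $|\phi(V_1)\cap (V_0\cup N_{G_0}(V_0))|\leq k_1$.

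\medskip
The main obstacle is the boundary where $k_0\in\{2n-3,2n-2\}$ (so $k_1\in\{0,1\}$), because then the inductive hypothesis does not directly apply to $V_0$. To handle this I would peel off vertices from $V_0$ one at a time to bring its size down to $2n-4$, apply induction, and then add back the peeled vertices while carefully tracking how many new neighbors each one can contribute (at least $n-1-(k-1)$ in $G_0$, plus one via $\phi$ into $G_1$). A second subtlety is the all-one-side case $k_1=0$: here induction in $G_0$ yields $k(n-1)-k(k+1)/2+1$ neighbors inside $G_0$, and the matching $\phi(V_0)$ contributes exactly $k$ further neighbors in $G_1$, totaling the desired $kn-k(k+1)/2+1$; the tight agreement here is what drives the shape of the bound. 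Verifying that the peel-and-rebuild argument never loses more than $k(k+1)/2-1$ edges in total is the delicate step I would expect to require the most care.
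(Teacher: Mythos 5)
First, a point of reference: the paper does not prove this theorem at all --- it is quoted verbatim from Fan and Lin \cite{Fan05} and used as a black box. So there is no in-paper proof to compare yours against; what follows is an assessment of your plan on its own terms. Your overall strategy (induction on $n$ through $XQ_n=XQ^{0}_{n-1}\oplus XQ^{1}_{n-1}$, splitting $V$ into $V_0\cup V_1$ with $k_0\geq k_1$) is workable, and the central computation you leave implicit does in fact close: in the main case $1\leq k_1\leq k_0\leq 2(n-1)-2$ you may discard the matching contributions entirely, sum the two inductive bounds $k_i(n-1)-k_i(k_i+1)/2+1$, and the surplus over the target $kn-k(k+1)/2+1$ comes out to exactly $(k_0-1)(k_1-1)\geq 0$. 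So no bookkeeping about how $\phi(V_1)$ overlaps $V_0\cup N_{G_0}(V_0)$ is needed there. The case $k_1=0$ with $k\leq 2n-4$ is tight, as you correctly observe, and the base case $n=3$ is a finite check (your triangle-freeness observation is sound and propagates to all $XQ_n$).

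The genuine gap is your handling of the boundary $k_0\in\{2n-3,2n-2\}$. The per-vertex gain of ``at least $n-1-(k-1)$'' on which your peel-and-rebuild step rests equals $n-k$, which is negative throughout this range (here $k\geq 2n-3>n$ for $n\geq 4$), so the estimate is vacuous and the rebuild proves nothing. Fortunately these cases do not need peeling at all, because the target is small when $k$ is large: for $k=2n-2$ the bound demanded is only $n$, and for $k=2n-3$ it is $2n-2$. Since $k_0\geq 2n-3$ forces $k_1\leq 1$, the matched partners $\phi(V_0)$ alone supply at least $k_0-k_1\geq 2n-4\geq n$ neighbours in the opposite half when $k=2n-2$; and when $k=2n-3$ (so $k_1=0$) they supply $2n-3$, with the one further neighbour needed inside $G_0$ guaranteed by connectivity, since $2n-3<2^{n-1}$. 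Replacing the peeling argument by this direct count repairs the proof; as written, the step you yourself flag as delicate is the one that fails.
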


\begin{lemma}\label{l1}
Let $\{u_1,u_2,\ldots,u_k\}$ be a set of $k$ distinct nodes of $XQ_n$ with $k\leq n$. There exists a node set $\{v_1,v_2,\ldots,v_k\}$ in $XQ_{n}-\{u_1,u_2,\ldots,u_k\}$ such that $(u_i,v_i)\in E(XQ_n)$ for every $1\leq i\leq k$.
\end{lemma}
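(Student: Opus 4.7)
The plan is to recast this as a system of distinct representatives (SDR) problem and verify Hall's marriage condition using Theorem \ref{neighborXQ}. For each $i$, let $S_i=N_{XQ_n}(u_i)\setminus\{u_1,\ldots,u_k\}$ be the set of admissible choices for $v_i$. A selection of pairwise distinct $v_1,\ldots,v_k$ with $v_i\in S_i$ is exactly an SDR for the family $(S_1,\ldots,S_k)$, so by Hall's theorem it suffices to show that for every nonempty $I\subseteq\{1,\ldots,k\}$ of size $k'$, one has $\left|\bigcup_{i\in I} S_i\right|\geq k'$.

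The base dimensions I would verify directly: $XQ_1=K_2$ forces $k\leq 1$ and the statement is trivial; $XQ_2\cong C_4$ admits a quick enumeration of the (at most two) placements of $\{u_1,u_2\}$. For $n\geq 3$, fix $I$ of size $k'$ and put $U_I=\{u_i\mid i\in I\}$. Noting that $N_{XQ_n}$ denotes the open neighborhood (as already confirmed by the $k=1$ case of Theorem \ref{neighborXQ}, which gives $n$), we can write $\bigcup_{i\in I} S_i=N_{XQ_n}(U_I)\setminus\{u_1,\ldots,u_k\}$, and since $N_{XQ_n}(U_I)$ is disjoint from $U_I$ its intersection with $\{u_1,\ldots,u_k\}$ lies in $\{u_1,\ldots,u_k\}\setminus U_I$, so it has cardinality at most $k-k'$. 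Since $1\leq k'\leq k\leq n\leq 2n-2$, Theorem \ref{neighborXQ} now yields
\[
\left|\bigcup_{i\in I} S_i\right|\ \geq\ k'n-\frac{k'(k'+1)}{2}+1-(k-k').
\]

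After rearrangement, Hall's condition $\left|\bigcup_{i\in I} S_i\right|\geq k'$ reduces to $k'n-k'(k'+1)/2+1\geq k$. Using $k\leq n$, it is enough to prove $(k'-1)n\geq (k'-1)(k'+2)/2$, which is trivial for $k'=1$ and for $k'\geq 2$ is equivalent to $k'\leq 2n-2$, a consequence of $k'\leq n$ and $n\geq 3$. Hence Hall's condition is satisfied, an SDR exists, and the desired $\{v_1,\ldots,v_k\}$ is produced. The one place calling for care is the bookkeeping that relates $|U_I|$, $|\{u_1,\ldots,u_k\}\setminus U_I|$, and $|N_{XQ_n}(U_I)|$ so that Theorem \ref{neighborXQ} can be fed in without slack; once that counting is pinned down, the remainder is a textbook Hall-style argument supported by the sharp neighborhood bound of Theorem \ref{neighborXQ}.
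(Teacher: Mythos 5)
Your proof is correct, but it takes a genuinely different route from the paper. The paper argues by induction on $n$ using the recursive decomposition $XQ_n=XQ_{n-1}^{0}\oplus XQ_{n-1}^{1}$: it splits $\{u_1,\ldots,u_k\}$ between the two halves, applies the induction hypothesis in each half when neither contains more than $n-1$ of the $u_i$, and otherwise matches every $u_i$ across the dimension-$n$ perfect matching. That argument is elementary, self-contained (it never invokes Theorem \ref{neighborXQ}), and directly constructive along the recursion, which fits the algorithmic use of the matching sets $C$ and $D$ later in the paper. Your SDR/Hall argument instead derives everything from the neighborhood-expansion bound of Theorem \ref{neighborXQ}; the counting $\left|\bigcup_{i\in I}S_i\right|\geq |N_{XQ_n}(U_I)|-(k-k')\geq k'n-k'(k'+1)/2+1-(k-k')$ and the reduction of Hall's condition to $(k'-1)n\geq (k'-1)(k'+2)/2$ are both right, and you correctly isolate the two points that need care: the convention that $N_{XQ_n}(V)$ is disjoint from $V$ (which the paper indeed uses implicitly in the proof of Lemma \ref{l2}), and the small dimensions $n\leq 2$ where Theorem \ref{neighborXQ} does not apply, which you check directly. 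What your approach buys is a uniform, non-inductive proof whose failure mode is informative: for $k=n+1$ the singleton case $k'=1$ of Hall's condition is exactly what can break, and it breaks precisely when some $u_i$ has all its neighbors among the others --- the $K_{1,n}$ configuration excluded in Lemma \ref{l2}. What it costs is reliance on two external ingredients (Hall's theorem and the Fan--Lin bound) where the paper needs only the recursive structure of $XQ_n$.
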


\begin{proof}
We prove the lemma by induction on $n$. For $n=1$, the lemma follows trivially. Suppose that the lemma holds on $XQ_m$ for $1\leq m\leq n-1$. Without loss of generality, we assume that $k=n$. Let $A_i=\{u_1,u_2,\ldots,u_n\}\cap V(XQ^{i}_{n-1})$ for $i\in \{0,1\}$. Without loss of generality, we assume that $|A_0|\geq |A_1|$. We have the following cases. \\

{\bf Case 1:} Suppose that $|A_0|\leq n-1$. Since $|A_0|\geq |A_1|$, $|A_1|\leq n-1$. Without loss of generality, we assume that $A_0=\{u_1,u_2,\ldots,u_t\}$ and $A_1=\{u_{t+1},u_{t+2},\ldots,u_n\}$. By induction, there exists a node set $\{v_1,v_2,\ldots,v_t\}$ in $XQ^{0}_{n-1}-\{u_1,u_2,\ldots,u_t\}$ such that $(u_i,v_i)\in E(XQ^{0}_{n-1})$ for every $1\leq i\leq t$, and there exists node set $\{v_{t+1},v_{t+2},\ldots,v_{n}\}$ in $XQ^{1}_{n-1}-\{u_{t+1},u_{t+2},\ldots,u_{n}\}$ such that $(u_i,v_i)\in E(XQ^{1}_{n-1})$ for every $t+1\leq i\leq n$. Thus $\{v_1,v_2,\ldots,v_n\}$ forms a desired set. \medskip

{\bf Case 2:} Suppose that $|A_0|=n$. For every $1\leq i\leq n$, we set $v_i$ being the neighbor of $u_i$ where $(u_i,v_i)$ is labeled by $n$. Thus $\{v_1,v_2,\ldots,v_n\}$ forms a desired set. \medskip

Thus the lemma holds.
\end{proof}

\begin{lemma}\label{l2}
Suppose that $n\geq 4$. Let $\{u_1,u_2,\ldots,u_{n+1}\}$ be a set of $n+1$ distinct nodes of $XQ_n$, where it is not isomorphic to $K_{1,n}$. Then there exists a node set $\{v_1,v_2,\ldots,v_{n+1}\}$ of $XQ_n-\{u_1,u_2,\ldots,u_{n+1}\}$ such that $(u_i,v_i)\in E(XQ_{n})$ for every $1\leq i\leq n+1$.
\end{lemma}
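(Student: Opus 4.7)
The plan is to recast the conclusion as a matching problem and apply Hall's marriage theorem. Form the bipartite graph $H$ with vertex classes $U=\{u_1,\ldots,u_{n+1}\}$ and $V(XQ_n)\setminus U$, joining $u_i$ to $v$ whenever $\{u_i,v\}\in E(XQ_n)$. A matching of $H$ that saturates $U$ is exactly a choice of $v_1,\ldots,v_{n+1}$ of the required form, so it is enough to verify $|N_H(S)|\ge|S|$ for every $S\subseteq U$.

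For $|S|=s$ with $2\le s\le n+1$, the assumption $n\ge 4$ yields $s\le n+1\le 2n-2$, so Theorem~\ref{neighborXQ} applies and gives $|N_{XQ_n}(S)|\ge sn-\tfrac{s(s+1)}{2}+1$. Since $N_H(S)=N_{XQ_n}(S)\setminus U$ and $|U\setminus S|=n+1-s$,
\[
|N_H(S)|\ \ge\ sn-\tfrac{s(s+1)}{2}+1-(n+1-s)\ =\ (s-1)\!\left(n-\tfrac{s}{2}\right),
\]
so Hall's condition reduces to $s(2n-s-1)\ge 2n$. The left-hand side is concave in $s$, hence its minimum on $[2,n+1]$ is attained at an endpoint: at $s=2$ it reads $4n-6\ge 2n$ (valid for $n\ge 3$), and at $s=n+1$ it reads $(n+1)(n-2)\ge 2n$, i.e.\ $n^2-3n-2\ge 0$, which holds for $n\ge 4$ but fails at $n=3$. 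This extremal case is precisely what forces the dimension bound in the hypothesis.

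For $|S|=1$, say $S=\{u_i\}$, suppose $u_i$ has no neighbor outside $U$. Then $N_{XQ_n}(u_i)\subseteq U\setminus\{u_i\}$, and since both sets have cardinality $n$ equality holds, so $u_i$ is adjacent in $XQ_n$ to each of the other $n$ vertices of $U$. Thus the star with center $u_i$ is a spanning subgraph of $XQ_n[U]$, which makes $\{u_1,\ldots,u_{n+1}\}$ isomorphic to $K_{1,n}$ and contradicts the hypothesis. Hence $|N_H(\{u_i\})|\ge 1$, Hall's condition is satisfied, and the desired matching exists.

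The main difficulty of this plan is handling both extremes $|S|=1$ and $|S|=n+1$ in tandem: Theorem~\ref{neighborXQ} is too weak at $|S|=1$ (where $|N_{XQ_n}(u_i)|=n$ can be entirely absorbed by $U\setminus\{u_i\}$) and must be supplemented by the exceptional $K_{1,n}$ hypothesis, while at $|S|=n+1$ the inequality $s(2n-s-1)\ge 2n$ is exactly the binding constraint, pinning down why the lemma assumes $n\ge 4$. An alternative by induction on $n$ splitting $XQ_n=XQ^0_{n-1}\oplus XQ^1_{n-1}$ in the spirit of Lemma~\ref{l1} runs into trouble in the subcase where $U\subseteq V(XQ^i_{n-1})$, since that would require Lemma~\ref{l2} one dimension lower but with $n+1$ vertices rather than $n$; the Hall formulation sidesteps this issue uniformly.
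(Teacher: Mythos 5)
Your argument is correct in substance but takes a genuinely different route from the paper. The paper proves Lemma~\ref{l2} constructively: it splits $XQ_n$ as $XQ_{n-1}^{0}\oplus XQ_{n-1}^{1}$ and analyzes how many of the $u_i$ fall into each half. When both halves contain at most $n-1$ of them it invokes Lemma~\ref{l1} in each half separately; when one half contains all $n+1$ it matches every $u_i$ to $\bar{u_i}$ along the dimension-$n$ edges (so the inductive obstruction you mention at the end does not actually arise in the paper's scheme); and the delicate case $|A_0|=n$, $|A_1|=1$ is handled by hand, using Theorem~\ref{neighborXQ} only once, for a set of size $n-1$. Your Hall-type argument replaces all of this with one uniform computation. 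It is shorter, it produces the system of distinct representatives as a single matching rather than by case-by-case assignment, and it pinpoints exactly where each hypothesis bites: the endpoint $s=n+1$ forces $n\ge 4$, and the singleton case forces the $K_{1,n}$ exclusion. The arithmetic is correct: $|N_H(S)|\ge (s-1)\left(n-\tfrac{s}{2}\right)$, the reduction to $s(2n-s-1)\ge 2n$, and the endpoint evaluations all check out, and Theorem~\ref{neighborXQ} is applicable since $s\le n+1\le 2n-2$.

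The one step you should patch is the singleton case. From $N_{XQ_n}(u_i)=U\setminus\{u_i\}$ you conclude that the induced subgraph on $U$ \emph{is} isomorphic to $K_{1,n}$, but a priori you have only shown that it \emph{contains} a spanning star centered at $u_i$: if some edge joined two of the leaves, the induced subgraph would not be a star, the lemma's hypothesis would be satisfied, and yet Hall's condition would fail at $\{u_i\}$ -- so as written the contradiction is not yet established. The claim is rescued by the fact that every graph in $XQ_n$ is triangle-free (a triangle in $G_0\oplus G_1$ cannot meet both halves because the cross edges form a perfect matching, and $K_2$ and $C_4$ are triangle-free, so induction applies); hence no two neighbors of $u_i$ inside $U$ can be adjacent and the induced subgraph really is $K_{1,n}$. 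This is a one-line repair, but it relies on a structural property of hypercube-like graphs that neither your write-up nor the paper states explicitly, so it needs to be said.
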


\begin{proof}
We set $A_i=\{u_1,u_2,\ldots,u_{n+1}\}\cap V(XQ_{n-1}^{i})$ for $i\in \{0,1\}$. Without loss of generality, we assume that $|A_0|\geq |A_1|$. Then we have the following cases. \medskip

\noindent{\bf Case 1:} Suppose that $|A_0|\leq n-1$. Since $|A_0|\leq n-1$, $|A_1|\leq n-1$. Without loss of generality, we assume that $A_0=\{u_1,u_2,\ldots,u_{t}\}$ and $A_1=\{u_{t+1},u_{t+2},\ldots,u_{n+1}\}$. By Lemma \ref{l1}, there exists a $t$ distinct node set $\{v_1,v_2,\ldots,v_{t}\}$ of $XQ_{n-1}^{0}-\{u_1,u_2,\ldots,u_{t}\}$ such that $(u_i,v_i)\in E(XQ^{0}_{n-1})$ for every $1\leq i\leq t$. Similarly, there exists a $n-t+1$ distinct node set $\{v_{t+1},v_{t+2},\ldots,v_{n+1}\}$ of $XQ_{n-1}^{1}-\{u_{t+1},u_{t+2},\ldots,u_{n+1}\}$ such that $(u_i,v_i)\in E(XQ^{1}_{n-1})$ for every $t+1\leq i\leq n+1$. Thus $\{v_1,v_2,\ldots,v_{n+1}\}$ forms a desired set. \medskip

\noindent{\bf Case 2:} Suppose that $|A_0|=n$. Without loss of generality, we assume that $A_0=\{u_1,u_2,\ldots,u_{n}\}$ and $A_1=\{u_{n+1}\}$.

\medskip
\noindent{\bf Case 2.1:} Suppose that $(u_i,u_{n+1})\notin E(XQ_n)$ for every $1\leq i\leq n$. For every $1\leq i\leq n+1$, we set $v_i$ being the neighbor of $u_i$ where $(u_i,v_i)$ is labeled by $n$. Thus $\{v_1,v_2,\ldots,v_{n+1}\}$ forms a desired set. \medskip

\noindent{\bf Case 2.2:} Suppose that $(u_i,u_{n+1})\in E(XQ_n)$ for some $1\leq i\leq n$. Without loss of generality, we assume that $(u_1,u_{n+1})\in E(XQ_n)$. Since $\{u_1,u_2,\ldots,u_{n+1}\}$ is not isomorphic to $K_{1,n+1}$ and $(u_1,u_{n+1})\in E(XQ_n)$, $(u_1,u_i)\notin E(XQ_n)$ for some $2\leq i\leq n$. Since $deg_{XQ^{0}_{n-1}}(u_1)=n-1$ and $(u_1,u_i)\notin E(XQ_n)$ for some $2\leq i\leq n$, there exists a node $v_1\in V(XQ_{n-1}^{0})-\{u_1,u_2,\ldots,u_{n}\}$ such that $(u_1,v_1)\in E(XQ_{n-1}^{0})$. By Theorem \ref{neighborXQ}, $|N(\{u_2,u_3,\ldots,u_n\})|\geq (n-1)(n-1)-\frac{(n-1)n}{2}+1=\frac{(n^2-3n)}{2}+2\geq 3$ if $n\geq 4$. Thus there exists a node $v_2\in V(XQ_{n-1}^{0})-\{u_1,u_2,\ldots,u_{n},v_1\}$ such that $(u_i,v_2)\in E(XQ_{n-1}^{0})$ for some $2\leq i\leq n$. Without loss of generality, we assume that $(u_2,v_2)\in E(XQ_{n-1}^{0})$. For every $3\leq i\leq n$, let $v_i$ be the neighbor of $u_i$ where $(u_i,v_i)$ is labeled by $n$. Since $deg_{XQ^{1}_{n-1}}(u_{n+1})=n-1$, there exists a node $v_{n+1}\in V(XQ_{n-1}^{1})-\{v_3,v_4,\ldots,v_{n}\}$ such that $(u_{n+1},v_{n+1})\in E(XQ_{n-1}^{1})$. Thus $\{v_1,v_2,\ldots,v_{n+1}\}$ forms a desired set. \medskip

\noindent{\bf Case 3:} Suppose that $|A_0|=n+1$. For every $1\leq i\leq n+1$, we set $v_i$ being the neighbor of $u_i$ where $(u_i,v_i)$ is labeled by $n$. Thus $\{v_1,v_2,\ldots,v_{n+1}\}$ forms a desired set.
\end{proof}

\begin{lemma}\label{l3}
Suppose that $n\geq 3$. Let $\{u_1,u_2,\ldots,u_{n+1}\}$ be a set of $n+1$ distinct nodes of $XQ_n$, where $N(u_1)=\{u_2,u_3,\ldots,u_{n+1}\}$. Then there exists a node set $\{v_2,v_3,\ldots,v_{n+1}\}$ of $XQ_n-\{u_1,u_2,\ldots,u_{n+1}\}$ such that $(u_i,v_i)\in E(XQ_{n})$ for every $2\leq i\leq n+1$.
\end{lemma}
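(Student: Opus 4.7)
The plan is to decompose $XQ_n = XQ^0_{n-1}\oplus XQ^1_{n-1}$ and exploit the fact that the closed neighborhood of $u_1$ splits cleanly across the two sub-layers. Since $u_1$ has exactly one neighbor outside its own layer (its match $\bar{u_1}$ under $\phi$) and $n-1$ neighbors inside, I would assume without loss of generality that $u_1\in V(XQ^0_{n-1})$, that $u_{n+1}=\bar{u_1}\in V(XQ^1_{n-1})$, and that $u_2,\ldots,u_n$ are the $n-1$ in-layer neighbors of $u_1$.

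First I would try the natural assignment $v_i:=\bar{u_i}$ for every $i\in\{2,\ldots,n\}$. By bijectivity of $\phi$ these are pairwise distinct; none equals $u_{n+1}=\bar{u_1}$ (since $u_i\neq u_1$); and all lie in $V(XQ^1_{n-1})$, so they avoid $\{u_1,\ldots,u_n\}$. It remains to pick $v_{n+1}$. Because the only neighbor of $u_{n+1}$ outside $XQ^1_{n-1}$ is $u_1$, which is forbidden, I look for $v_{n+1}$ inside $N_{XQ^1_{n-1}}(u_{n+1})\setminus\{\bar{u_2},\ldots,\bar{u_n}\}$. If this set is non-empty, any element will do.

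Both of the sets just mentioned have size $n-1$, so the only obstruction is the exceptional case $N_{XQ^1_{n-1}}(u_{n+1})=\{\bar{u_2},\ldots,\bar{u_n}\}$. In that case I would reassign: locate an index $i\in\{2,\ldots,n\}$ together with a vertex $w\in V(XQ^0_{n-1})\setminus\{u_1,\ldots,u_n\}$ with $(u_i,w)\in E(XQ^0_{n-1})$, and then put $v_i:=w$, $v_{n+1}:=\bar{u_i}$ (which is a valid neighbor of $u_{n+1}$ by the exceptional-case hypothesis), and $v_j:=\bar{u_j}$ for the remaining indices $j$. Distinctness and avoidance of $\{u_1,\ldots,u_{n+1}\}$ then follow by routine cross-layer bookkeeping, since $w$ sits in the opposite layer from every $\bar{u_k}$ and $\bar{u_i}\neq\bar{u_j}$ by bijectivity.

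The main obstacle is producing such a pair $(i,w)$, i.e., showing $N_{XQ^0_{n-1}}(\{u_2,\ldots,u_n\})\not\subseteq\{u_1\}$. For $n\geq 4$ the sub-layer $XQ^0_{n-1}$ has dimension $n-1\geq 3$, so Theorem \ref{neighborXQ} applied with $k=n-1$ gives $|N_{XQ^0_{n-1}}(\{u_2,\ldots,u_n\})|\geq (n-1)^2-(n-1)n/2+1=(n-1)(n-2)/2+1\geq 2$, which cannot fit inside $\{u_1\}$. The base case $n=3$ has to be handled by hand: $XQ^0_2$ is necessarily a $4$-cycle containing $u_1$ and its two cycle-neighbors $u_2,u_3$, so the remaining vertex $w$ of that cycle is adjacent to each of $u_2,u_3$, providing the required external neighbor.
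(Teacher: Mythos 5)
Your proof is correct, but it takes a genuinely different route from the paper's. The paper proves Lemma~\ref{l3} by induction on $n$: after placing $u_1$ and its $n-1$ in-layer neighbors inside one sublayer $XQ_{n-1}^{j}$ (where they again form a star), it invokes the inductive hypothesis to match $u_2,\ldots,u_n$ inside that sublayer, and then picks any in-layer neighbor of $u_{n+1}$ in the opposite sublayer; the base case $n=3$ is disposed of by exhibiting the configuration in a figure. You instead give a direct, non-inductive construction: match $u_2,\ldots,u_n$ across the layers via $v_i=\bar{u_i}$, and only patch locally in the degenerate case $N_{XQ^1_{n-1}}(u_{n+1})=\{\bar{u_2},\ldots,\bar{u_n}\}$, where you reroute one $u_i$ to an external in-layer neighbor $w$ and recycle $\bar{u_i}$ for $u_{n+1}$. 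Your existence argument for $(i,w)$ via Theorem~\ref{neighborXQ} is valid for $n\geq 4$ (the theorem applies to $XQ^0_{n-1}$ since its dimension is at least $3$ and $k=n-1\leq 2(n-1)-2$, and a neighborhood of size at least $2$ cannot sit inside $\{u_1\}$), and your $C_4$ argument correctly settles $n=3$, since $XQ_2$ contains only the $4$-cycle. The bookkeeping claims (distinctness by bijectivity of $\phi$, disjointness from $\{u_1,\ldots,u_{n+1}\}$ by the layer split) all check out. What the paper's induction buys is independence from Theorem~\ref{neighborXQ}; what your construction buys is an explicit one-shot description of the $v_i$'s and a much lighter base case, needed only in the exceptional branch.
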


\begin{proof}
We prove the lemma by induction on $n$. Suppose that $n=3$. The desired set is illustrated in Figure \ref{xq3_proof}. Suppose that the lemma holds on $XQ_m$ for $3\leq m\leq n-1$. Without loss of generality, we assume that $u_{n+1}$ is the neighbor of $u_1$ where $(u_1,u_{n+1})$ is labeled by $n$. Thus, we have $\{u_1,u_2,\ldots,u_{n}\}\subseteq V(XQ_{n-1}^{j})$ and $\{u_{n+1}\}\subseteq V(XQ_{n-1}^{1-j})$ for some $j\in \{0,1\}$. By induction, there is a node set $\{v_2,v_3,\ldots,v_{n}\}$ of $XQ_{n-1}^{j}-\{u_1,u_2,\ldots,u_{n}\}$ such that $(u_i,v_i)\in E(XQ_{n-1}^{j})$ for every $2\leq i\leq n$. Let $v_{n+1}$ be the neighbor of $u_{n+1}$ where $(u_{n+1},v_{n+1})$ is labeled by $2$. Obviously, $v_{n+1}\in V(XQ_{n-1}^{1-j})$. Thus $\{v_2,v_3,\ldots,v_{n+1}\}$ forms a desired set.
\end{proof}

\begin{figure}[h]
\centering
 \resizebox*{4.2in}{!}{
 \includegraphics{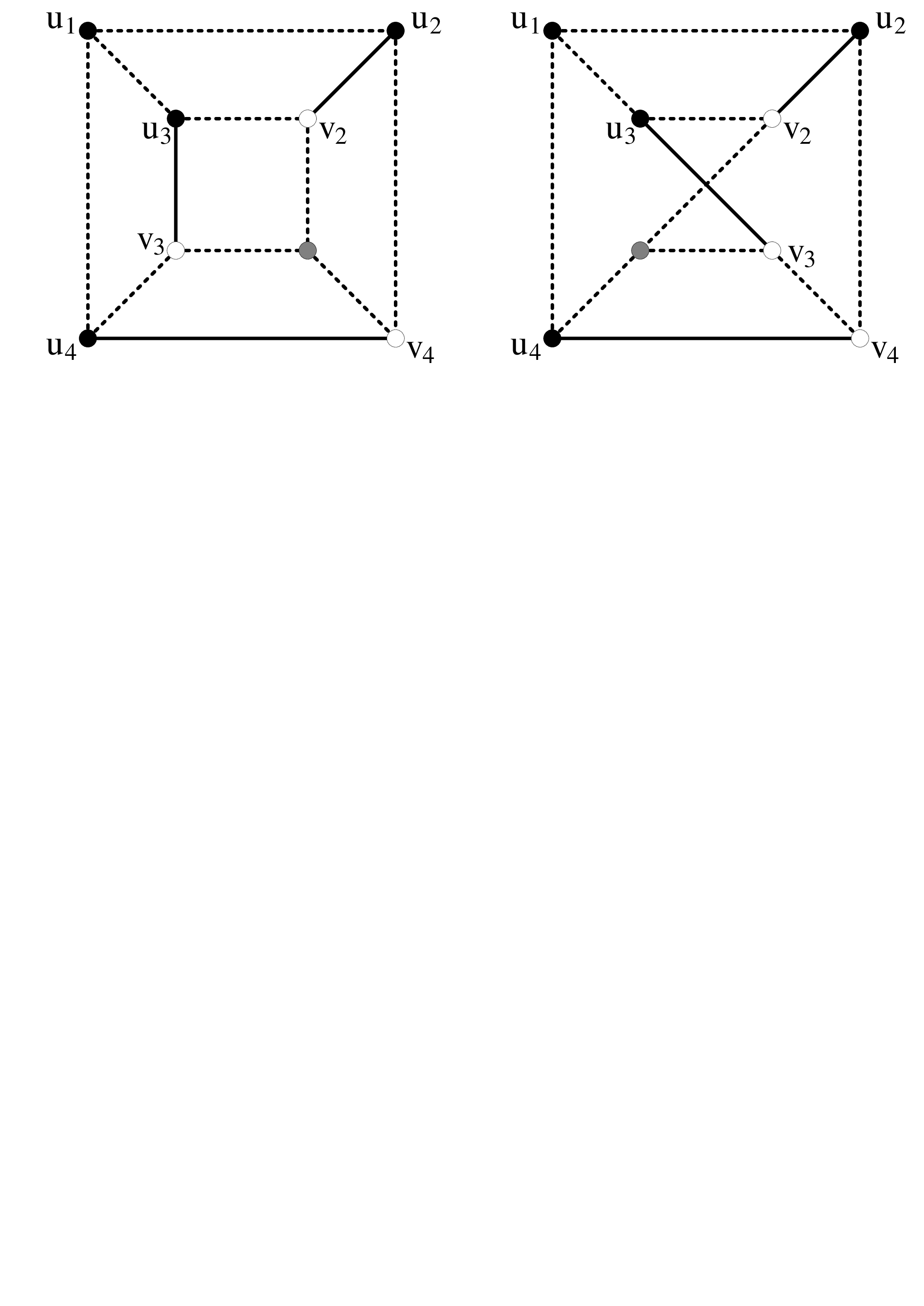}}
 \caption{Two structures of $XQ_3$ in the proof of Lemma \ref{l3}.} \label{xq3_proof}\vspace*{12mm}
 \end{figure}

\section{Local diagnosis under the BGM model}\label{sec4}

First, we establish a necessary and sufficient condition for ensuring distinguishability in the following theorem.

\begin{theorem}\label{BGMdistinguishable}
Let $F_1$ and $F_2$ be any two distinct node subsets of a system $G=(V,E)$. Thus, $(F_1,F_2)$ is a distinguishable pair if and only if one of the following states holds:

1. There exist a node $u\in V-(F_1\cup F_2)$ and a node $v\in F_1\bigtriangleup F_2$ such that $(u,v)\in E$;

2. There exist two distinct nodes $u,v\in F_1-F_2$ such that $(u,v)\in E$;

3. There exist two distinct nodes $u,v\in F_2-F_1$ such that $(u,v)\in E$. (See Figure \ref{dis} for an illustration.)
\end{theorem}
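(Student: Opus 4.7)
The plan is to prove the biconditional by analyzing, edge-by-edge, which syndrome values are forced by consistency with $F_1$ and with $F_2$ under the BGM rules. The key observation that drives everything is: for a test $(u,v)\in L$ and a faulty set $F$, the value $\sigma(u,v)$ is forced to $1$ if $u\in F$ and $v\in F$, forced to $1$ if $u\notin F$ and $v\in F$, forced to $0$ if $u\notin F$ and $v\notin F$, and free otherwise. Thus $F_1$ and $F_2$ are indistinguishable exactly when the values forced on every edge by the two consistency requirements are mutually compatible.

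For the forward direction, I would assume one of the three conditions holds and exhibit one edge whose forced values disagree. In case (1), take the witnessing edge $(u,v)$ with $u\in V-(F_1\cup F_2)$ and, without loss of generality, $v\in F_1-F_2$: $F_1$-consistency forces $\sigma(u,v)=1$ while $F_2$-consistency forces $\sigma(u,v)=0$, so $\sigma(F_1)\cap\sigma(F_2)=\emptyset$. In case (2), with adjacent $u,v\in F_1-F_2$: the pair is faulty in $F_1$, forcing $\sigma(u,v)=1$, but fault-free in $F_2$, forcing $\sigma(u,v)=0$. Case (3) is symmetric.

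For the converse, assume none of (1)--(3) holds and construct a single syndrome $\sigma$ consistent with both sets. I would classify each edge $(u,v)\in L$ by the part of the partition $\{F_1\cap F_2,\ F_1-F_2,\ F_2-F_1,\ V-(F_1\cup F_2)\}$ containing $u$, and then use the failure of the three conditions to limit where $v$ can sit. Concretely: failure of (1) forbids any edge between $V-(F_1\cup F_2)$ and $F_1\triangle F_2$; failure of (2) forbids internal edges of $F_1-F_2$; failure of (3) forbids internal edges of $F_2-F_1$. After this pruning, each remaining edge type has at least one consistency constraint that is vacuous (because one endpoint is a faulty tester in the corresponding $F_i$ and the other is not in $F_i$), or both constraints agree, so a common value can be picked. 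For example, for $u\in F_1-F_2$ and $v\in F_2-F_1$, $F_1$-consistency is vacuous on $(u,v)$ (since $v\notin F_1$) and $F_2$-consistency forces $\sigma(u,v)=1$; setting $\sigma(u,v)=1$ works.

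The main obstacle is purely organizational: there are in principle $16$ subcases from the $4\times 4$ partition of endpoints, and one must check in each that the forced values are compatible once the three forbidden edge types are excluded. I expect the clean way to write this is to table the four possible ``$F$-roles'' of $v$ for each of the four ``$F$-roles'' of $u$, note which cells are eliminated by the failure of (1)--(3), and verify compatibility in the surviving cells; no further combinatorial or graph-theoretic machinery beyond Table~\ref{BGMtable} is needed.
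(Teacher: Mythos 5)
Your proposal is correct and follows essentially the same route as the paper: the forward direction exhibits, for each of the three conditions, a single test whose value forced by $F_1$-consistency conflicts with the value forced by $F_2$-consistency (exactly the paper's ``sufficient condition'' argument), and the converse constructs a common syndrome when all three conditions fail (the paper delegates this to Figure~\ref{compare}, whereas you spell out the $4\times 4$ endpoint classification explicitly). No substantive difference in method; your case table is just a more detailed rendering of the paper's figure-based construction.
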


\begin{figure}[h]
\vspace*{-2mm}
 \begin{center}
 \resizebox*{4in}{!}{
 \includegraphics{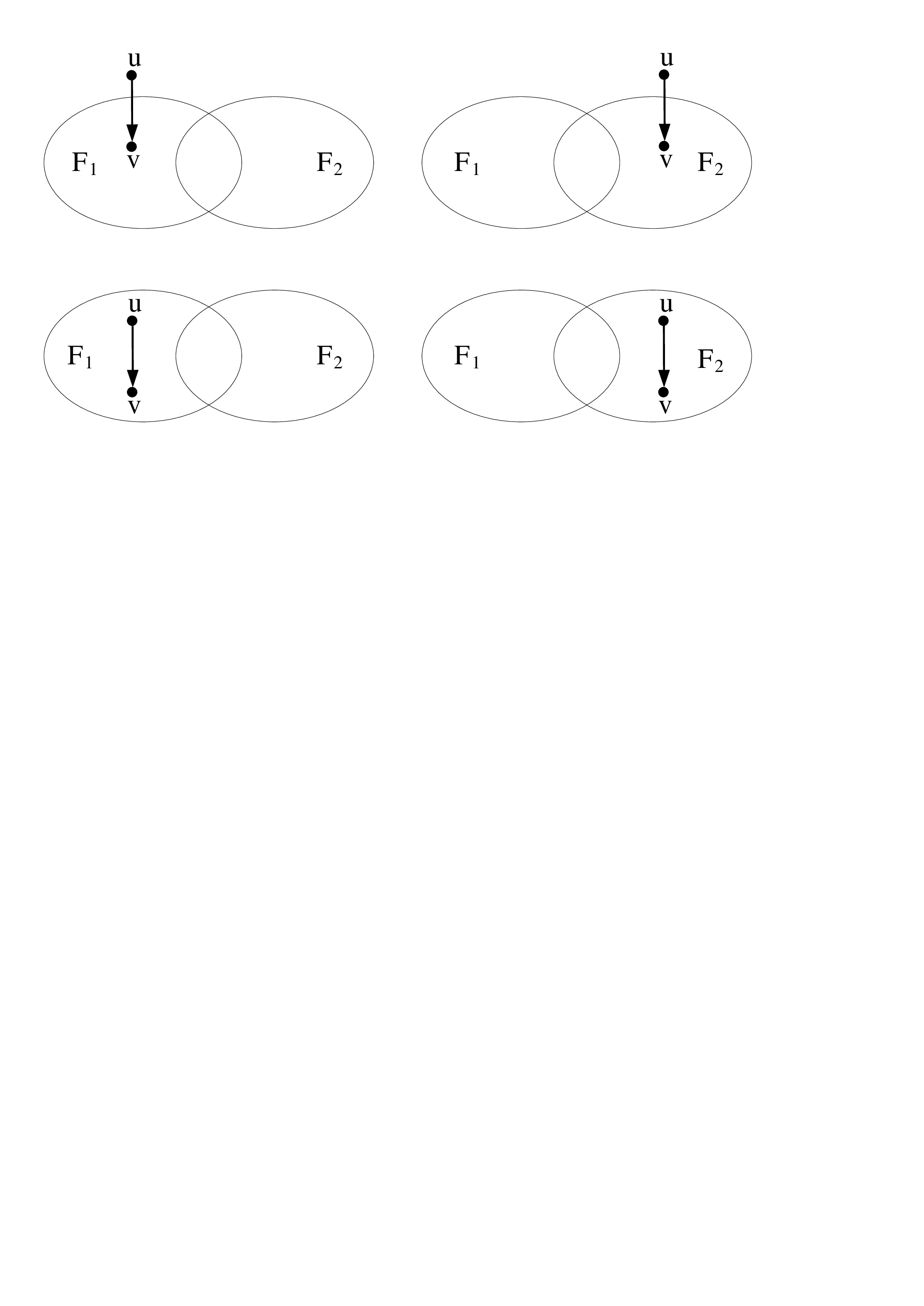}}
 \caption{An illustration of Theorem \ref{BGMdistinguishable}.}
 \label{dis}
 \end{center}\vspace*{-6mm}
\end{figure}

\begin{figure}[b]
\vspace*{-2mm}
\centering
 \resizebox*{3in}{!}{
\hspace{10mm} \includegraphics{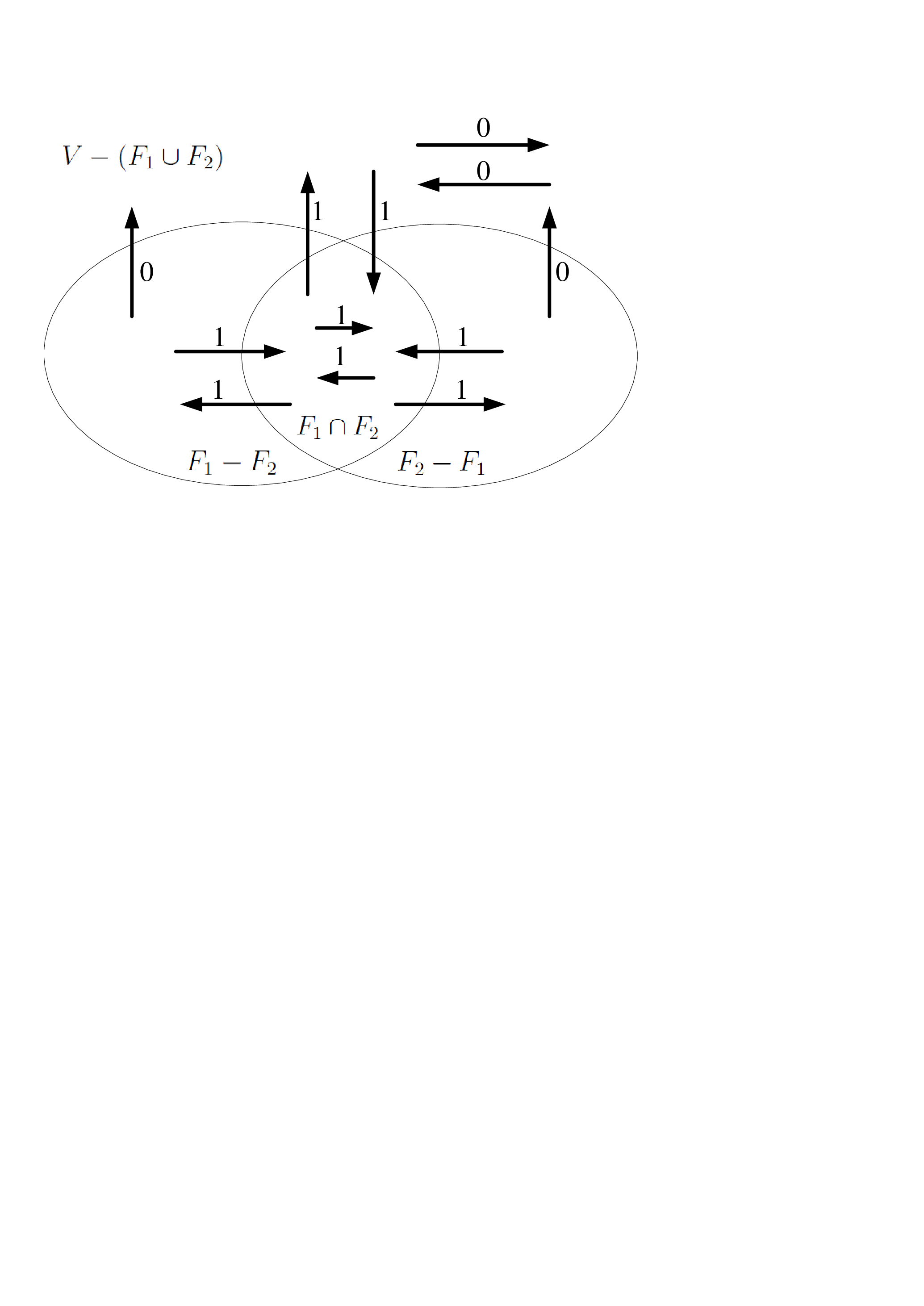}}
 \caption{A syndrome for which both $F_1$ and $F_2$ are allowable faulty sets under the BGM model.}
 \label{compare}
\end{figure}

\begin{proof}
First, we prove the necessary condition. Suppose that $(F_1,F_2)$ is a distinguishable pair, and none of the states holds. Hence, there exists a syndrome (Figure \ref{compare}) such that $F_1$ and $F_2$ are allowable faulty sets under the BGM model, thus contradicting the assumption that $(F_1,F_2)$ is a distinguishable pair.

\eject

We then prove the sufficient condition. Suppose that one of the states holds, and $(F_1,F_2)$ is an indistinguishable pair. Thus, a syndrome exists such that $F_1$ and $F_2$ are allowable faulty sets under the BGM model. Without loss of generality, we assume that $u\in V-(F_1\cup F_2)$ and $v\in F_1-F_2$ exist. Under the BGM model, if $\sigma(u,v)=0$, $F_1$ is not the faulty set. If $\sigma(u,v)=1$, $F_2$ is not the faulty set. We then consider another two states. Without loss of generality, assume that $u,v\in F_1-F_2$. Under the BGM model, if $\sigma(u,v)=0$, $F_1$ is not the faulty set. If $\sigma(u,v)=1$, $F_2$ is not the faulty set. Thus, the allowable faulty set is unique, and $(F_1,F_2)$ is a distinguishable pair, thereby contradicting the assumption that $(F_1,F_2)$ is an indistinguishable pair.
\end{proof}

For the local diagnosability, we have the following theorem.

\begin{theorem}\label{BGMdis}
Let $G=(V,E)$ be a system, and $u\in V(G)$. $G$ is locally $t$-diagnosable at node $u$ if and only if for any two distinct sets $F_1,F_2\subset V$ with $|F_1|\leq t$, $|F_2|\leq t$ and $u\in F_1\bigtriangleup F_2$, $(F_1,F_2)$ is a distinguishable pair.
\end{theorem}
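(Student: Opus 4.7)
The plan is to prove both directions of the biconditional by contraposition, using nothing beyond the definitions of local $t$-diagnosability and distinguishability that were given in Section~\ref{sec2}. The statement is really a translation lemma between the ``local'' viewpoint (every syndrome-compatible small fault set must agree with the true one on $u$) and the ``pairwise'' viewpoint (all candidate fault sets that disagree on $u$ are pairwise separable by some syndrome), so I do not expect any genuine combinatorial obstacle; the whole exercise is bookkeeping on $\sigma(F)$ and $F_1\triangle F_2$.

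For the necessary direction, I would assume that the pairwise condition fails and produce a witness that $G$ is not locally $t$-diagnosable at $u$. Concretely, pick $F_1,F_2\subset V$ with $|F_1|,|F_2|\leq t$, $u\in F_1\triangle F_2$, and $\sigma(F_1)\cap\sigma(F_2)\neq\emptyset$. By relabeling, I may assume $u\in F_1\setminus F_2$. Choose any $\sigma\in\sigma(F_1)\cap\sigma(F_2)$. Then $F_1$ is a fault set of size at most $t$ containing $u$ whose syndrome $\sigma$ is also compatible with $F_2$, a fault set of size at most $t$ that does \emph{not} contain $u$. This is exactly the failure of local $t$-diagnosability at $u$.

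For the sufficient direction, I would fix any $F\ni u$ with $|F|\leq t$ and any syndrome $\sigma_F\in\sigma(F)$, and let $F'$ be any fault set with $|F'|\leq t$ that is consistent with $\sigma_F$. If $F'=F$ then $u\in F'$ trivially. Otherwise, assuming toward contradiction that $u\notin F'$, I note that $u\in F\triangle F'$, so by hypothesis $(F,F')$ is distinguishable, i.e.\ $\sigma(F)\cap\sigma(F')=\emptyset$. But $\sigma_F$ lies in both $\sigma(F)$ and $\sigma(F')$, a contradiction. Hence $u\in F'$, which is the defining property of local $t$-diagnosability at $u$.

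The only fiddly points to check while writing the formal proof are the symmetry step (``without loss of generality $u\in F_1\setminus F_2$''), and making sure the notational identification between ``$F'$ is compatible with $\sigma_F$'' and ``$\sigma_F\in\sigma(F')$'' is applied in both directions; neither is a real difficulty, so the proof should be short.
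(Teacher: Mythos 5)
Your proof is correct, but there is nothing in the paper to compare it against: the authors state Theorem \ref{BGMdis} with no proof at all, presumably regarding it as the routine local analogue of the classical pairwise characterization of $t$-diagnosability. Your proposal therefore fills a gap rather than reproducing or diverging from an existing argument. Both directions are handled soundly: the contrapositive of necessity takes an indistinguishable pair $(F_1,F_2)$ with $u\in F_1\setminus F_2$ and a common syndrome $\sigma\in\sigma(F_1)\cap\sigma(F_2)$, and observes that $F_2$ is then a compatible fault set of size at most $t$ omitting $u$, which directly violates the definition of local $t$-diagnosability at $u$; sufficiency follows because any $F'$ of size at most $t$ that is compatible with a syndrome of $F\ni u$ but omits $u$ would satisfy $u\in F\bigtriangleup F'$ while placing that syndrome in $\sigma(F)\cap\sigma(F')$, contradicting the assumed distinguishability. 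The only point to make explicit in a final write-up is the identification you already flag, namely that ``$F'$ is compatible with the syndrome $\sigma_F$ produced by $F$'' means precisely ``$\sigma_F\in\sigma(F')$'' under the paper's definition of $\sigma(\cdot)$; with that stated, the proof is complete.
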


Here, we propose a specific structure called the $t$-diagnosis-tree for local diagnosis under the BGM model. The definition of a $t$-diagnosis-tree is as follows.

\begin{definition}\label{dt}
A $t$-diagnosis-tree $DT_{t}(u)$ is a tree with order $t$ and rooted at $u$, such that $V(DT_{t}(u))=\{u\}\cup \{x_{i}\mid 1\leq i\leq t\}\cup \{y_{i}\mid 1\leq i\leq t\}$, and $E(DT_{t}(u))=\{\{u,x_{i}\},\{x_{i},y_{i}\}\mid 1\leq i\leq t\}$. Figure~\ref{trtree} illustrates the $DT_{t}(u)$.
\end{definition}

\begin{figure}[h]
\vspace*{-4mm}
 \begin{center}
 \resizebox*{1.5in}{!}{
 \includegraphics{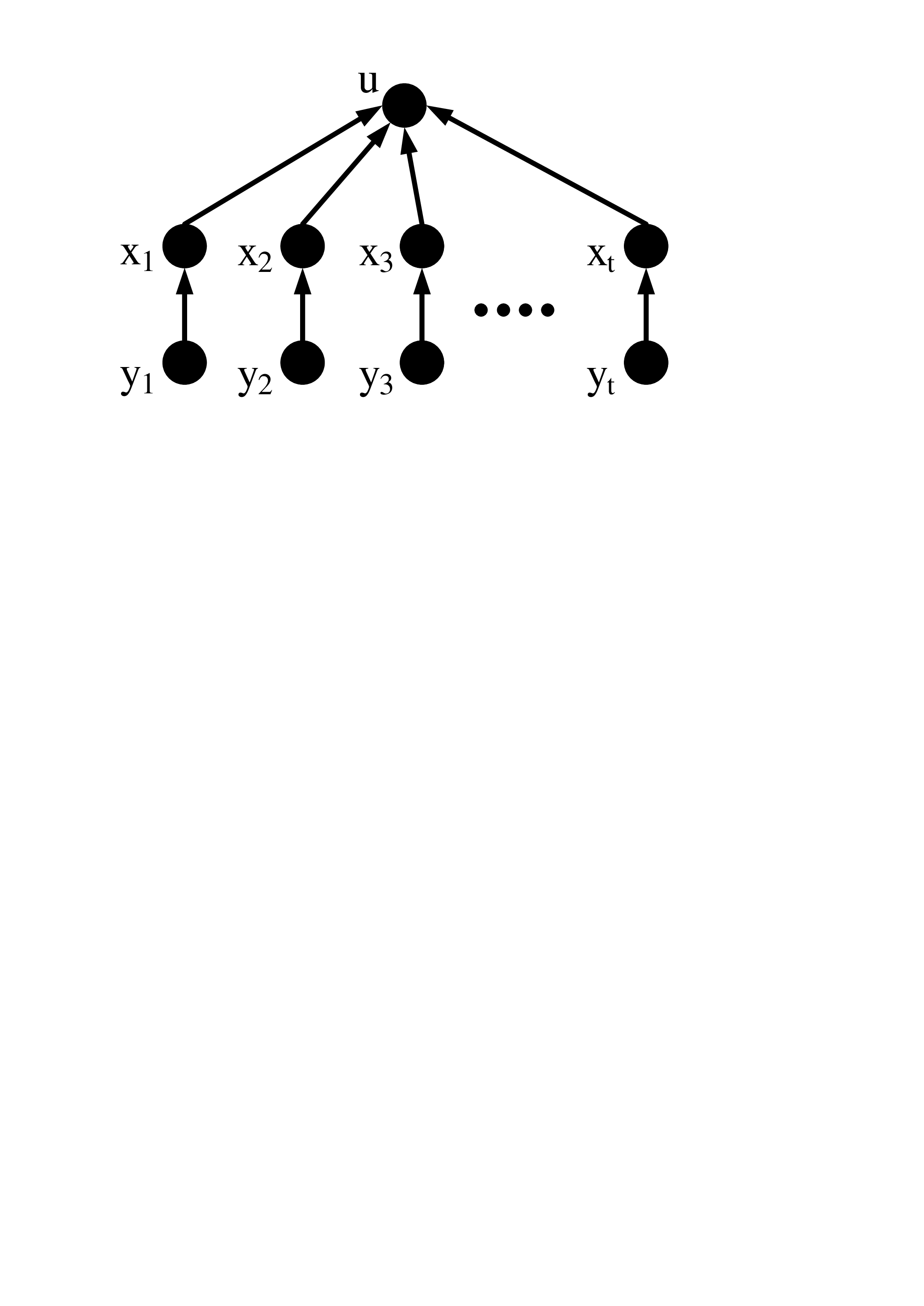}}
 \caption{A $t$-diagnosis-tree $DT_{t}(u)$ of Definition \ref{dt}.}
 \label{trtree}
 \end{center}\vspace*{-2mm}
\end{figure}

We propose the local diagnosis algorithm (\textbf{LDA}) for determining the fault status of a node $u$ in a $t$-diagnosis-tree $DT_{t}(u)$ with two test rounds under the BGM model in Algorithm \ref{two_rounds}.

\begin{algorithm} \label{two_rounds}
    \caption{{\bf LDA}$(DT_{t}(u))$}
    \KwIn{A $t$-diagnosis-tree $DT_{t}(u)$.}
    \KwOut{The value is $0$ or $1$ if $u$ is fault-free or faulty, respectively.}
    \Begin{
        {\bf Round 1:} Perform the test $(y_{i},x_{i})$ for every $1\leq i\leq t$.

        {\bf set} $A=\{x_{i}\mid \sigma(y_{i},x_{i})=0$ for $1\leq i\leq t\}$\;

        \If{$|A|\neq 0$}{
            {\bf set} $z$ being a node in $A$\;

            {\bf Round 2:} Perform the test $(z,u)$.

                \lIf{$\sigma(z,u)=0$}{\Return $0$}

                \lElse{\Return $1$}
        }
        \Else{
            \Return $0$\;
        }
    }
\end{algorithm}

\medskip
We then prove that a node $u$ in a $t$-diagnosis-tree $DT_{t}(u)$ can be diagnosed accurately in two test rounds with \textbf{LDA}$(DT_{t}(u))$ under the BGM diagnosis model.

\begin{theorem}\label{lda_proof}
Suppose that $DT_{t}(u)$ is a $t$-diagnosis-tree with order $t$ and rooted at $u$. If $F$ is a faulty set in $DT_{t}(u)$ with $|F|\leq t$, then the faulty/fault-free status of $u$ can be identified accurately in two test rounds with \textbf{LDA}$(DT_{t}(u))$ under the BGM diagnosis model.
\end{theorem}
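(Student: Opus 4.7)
The strategy is to exploit a special one-sided property of the BGM table: if a test $(a,b)$ returns $0$, then $b$ must be fault-free, because the only rows in Table \ref{BGMtable} with $\sigma(\cdot,\cdot) = 0$ in that position are those where the testee is fault-free. In the other direction, $\sigma(a,b)=1$ only tells us that at least one of $a,b$ is faulty. With this, the correctness of \textbf{LDA} reduces to a clean case split on whether the set $A$ computed after Round~1 is empty.

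First I would record the key observation: for each leaf pair $(x_i,y_i)$, if $\sigma(y_i,x_i)=0$ then $x_i$ is fault-free, while if $\sigma(y_i,x_i)=1$ then at least one of $x_i,y_i$ lies in $F$. This is a direct reading of the four rows of Table~\ref{BGMtable} applied to the test $(y_i,x_i)$ (treating $y_i$ as the tester). Notice that the siblings $\{x_i,y_i\}$ for different values of $i$ are pairwise disjoint subsets of $V(DT_t(u))\setminus\{u\}$.

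Next I would handle the case $|A|\neq 0$. Pick any $z=x_i \in A$; by the observation, $z$ is fault-free. Because $z$ is a fault-free tester, the Round~2 test $(z,u)$ is reliable: $\sigma(z,u)=0$ exactly when $u$ is fault-free, and $\sigma(z,u)=1$ exactly when $u$ is faulty. The algorithm returns this value, so the diagnosis of $u$ is correct.

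Finally, for the case $|A|=0$, every pair $(x_i,y_i)$ satisfies $\sigma(y_i,x_i)=1$, so each of the $t$ disjoint pairs contributes at least one faulty node to $F$. Thus $|F \cap \bigcup_{i=1}^{t}\{x_i,y_i\}|\geq t$, and since $|F|\leq t$ the whole fault budget is consumed by these leaf-pair nodes, forcing $u\notin F$. The algorithm returns $0$, which is correct. Together, the two cases cover every syndrome, establishing that \textbf{LDA}$(DT_t(u))$ outputs the true status of $u$ in two test rounds. I do not expect any serious obstacle; the only subtlety is remembering that the BGM model forces $\sigma(y_i,x_i)=1$ when both endpoints are faulty (unlike PMC), which is precisely what makes the pigeonhole argument in the second case go through.
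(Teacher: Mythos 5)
Your proposal is correct and follows essentially the same two-case argument as the paper: a test result of $0$ certifies the testee $x_i$ fault-free and makes the Round~2 test of $u$ reliable, while if all $t$ disjoint pairs $\{x_i,y_i\}$ report $1$, each pair contributes a faulty node, so $u\in F$ would force $|F|\geq t+1$. Your write-up is somewhat more explicit than the paper's (in particular about the disjointness of the pairs underlying the counting step), but the underlying reasoning is identical.
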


\begin{proof}
Depending on the definition of the BGM model and the results listed in Table \ref{BGMtable}, if $\sigma(y_{i},x_{i})=0$ for every $1\leq i\leq t$, the node $x_i$ is fault-free. Thus, the test $(x_i,u)$ is reliable. If $\sigma(y_{i},x_{i})=1$ for every $1\leq i\leq t$, there exists at least one faulty node in $\{x_i,y_i\}$. Assume that $u\in F$. We have $|F|\geq t+1$, which contradicts the assumption that $|F|\leq t$. Thus, the theorem holds.
\end{proof}

\section{One good neighbor conditional local diagnosis algorithm under the BGM model}\label{sec5}

Lai et al. proposed the concept of conditional fault diagnosis by restricting that, for each processor $v$ in the network, all processors directly connected to $v$ do not fail simultaneously \cite{Lai05}. Suppose that $G=(V,E)$. A set $F\subset V(G)$ is a {\it conditional faulty set} if $N_{G}(v)\not\subseteq F$ for any node $v\in V(G)-F$. A system $G$ is {\it conditionally faulty} if the faulty node set of $G$ forms a conditional faulty set. For any two distinct conditional faulty sets $F_1$ and $F_2$ of $G$ with $|F_1| \leq t$ and $|F_2| \leq t$, if $(F_1,F_2)$ is a distinguishable pair, $G$ is {\it conditionally $t$-diagnosable}. The maximum number of conditional faulty nodes that can be accurately identified in $G$ is called the {\it conditional diagnosability} of $G$.

\begin{theorem}
Suppose that $F_1$ and $F_2$ are two distinct conditional faulty sets in a system $G$ with $|F_1| \leq t$ and $|F_2| \leq t$. Thus, $G$ is conditionally $t$-diagnosable under the BGM diagnosis model.
\end{theorem}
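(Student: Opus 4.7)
The plan is to reduce this to Theorem \ref{BGMdistinguishable}: to prove conditional $t$-diagnosability it suffices to show that every pair of distinct conditional faulty sets $F_1, F_2$ with $|F_1|, |F_2| \le t$ satisfies at least one of the three structural conditions guaranteeing distinguishability. The entire argument will be driven by exploiting the conditional-fault hypothesis $N_G(v) \not\subseteq F_i$ for each $v \in V - F_i$, which says that no non-faulty node is completely surrounded by $F_i$.

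First I would observe that because $F_1 \ne F_2$, the symmetric difference $F_1 \triangle F_2$ is nonempty, so without loss of generality I can pick a node $v \in F_1 - F_2$. Since $v \in V - F_2$, the conditional-fault property applied to $F_2$ gives a neighbor $w \in N_G(v)$ with $w \notin F_2$. Then I would do a short case split on whether $w$ belongs to $F_1$:
\begin{itemize}
\item If $w \notin F_1$, then $w \in V - (F_1 \cup F_2)$ and $v \in F_1 \triangle F_2$ with $\{w,v\} \in E$, which is exactly condition~1 of Theorem \ref{BGMdistinguishable}.
\item If $w \in F_1$, then $v, w$ are two distinct nodes of $F_1 - F_2$ that are adjacent, which is exactly condition~2 of Theorem \ref{BGMdistinguishable}.
\end{itemize}
Either way $(F_1,F_2)$ is distinguishable, and since this holds for every such pair, Theorem \ref{BGMdis}'s pairwise criterion (together with the definition of conditional $t$-diagnosability) yields the result.

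There is really no hard step here; the proof is essentially a one-line unfolding of the definition of a conditional faulty set combined with the characterization in Theorem \ref{BGMdistinguishable}. The only place where I would be careful is the symmetry: I chose $v \in F_1 - F_2$ and invoked the conditional hypothesis on $F_2$, so the case where $F_1 - F_2 = \emptyset$ (meaning $F_1 \subsetneq F_2$) must be handled by swapping the roles of $F_1$ and $F_2$ and picking $v \in F_2 - F_1$ instead. I would state this swap explicitly to make the ``without loss of generality'' watertight, since the two conditional-faulty hypotheses are used asymmetrically in the argument.
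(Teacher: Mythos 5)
Your proposal is correct and follows essentially the same route as the paper: reduce conditional $t$-diagnosability to the pairwise criterion of Theorem \ref{BGMdis} and then use the conditional-fault hypothesis to produce an edge witnessing one of the three conditions of Theorem \ref{BGMdistinguishable}. In fact your write-up is more complete than the paper's own two-line sketch, since you explicitly choose the node in $F_1\triangle F_2$, carry out the case split on whether its non-faulty neighbor lies in the other faulty set, and address the symmetry when $F_1-F_2=\emptyset$.
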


\begin{proof}
Without loss of generality, consider the faulty set $F_1$. Let $u\in V(G)-F_1$. Because $F_1$ is a conditional faulty set, a node $v$ is adjacent to $u$ such that $v\notin F_1$. According to Theorems \ref{BGMdistinguishable} and \ref{BGMdis}, $G$ is conditionally $t$-diagnosable under the BGM diagnosis model.
\end{proof}

Here, we propose the conditional local diagnosis algorithm (\textbf{CLDA}) for determining the fault status of a node $u$ in a conditional faulty diagnostic system $G$ under the BGM model in Algorithm \ref{CLDAalgo}.

\begin{algorithm} \label{CLDAalgo}
    \caption{{\bf CLDA}$(G,u,t)$}
    \KwIn{A graph $G$ and a node $u\in V(G)$ with $\textrm{deg}_{G}(u)=t$.}
    \KwOut{The value is $0$ or $1$ if $u$ is fault-free or faulty, respectively.}
    \Begin{
        Perform the tests $(u,v_{i})$ and $(v_{i},u)$, where $v_i\in N_{G}(u)$ for every $1\leq i\leq t$.

        \If{there exists at least one test result pair $((u,v_i),(v_i,u))=(0,0)$}{
            \Return $0$\;
        }
        \Else{
            \Return $1$\;
        }
    }
\end{algorithm}

We then prove that a node $u$ in a conditional faulty diagnostic system $G$ can be diagnosed accurately with \textbf{CLDA}$(G,u,t)$ under the BGM diagnosis model.

\begin{theorem}\label{clda_proof}
Suppose that $u$ is a node in $G$ with $\textrm{deg}_{G}(u)=t$. If $F$ is a conditional faulty set in $G$ with $|F|\leq t$, then the faulty/fault-free status of $u$ can be identified accurately with \textbf{CLDA}$(G,u,t)$ under the BGM diagnosis model.
\end{theorem}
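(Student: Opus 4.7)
The plan is to reduce the theorem to a single case analysis of the BGM fault table for the test pair $\{(u,v_i),(v_i,u)\}$, and then to invoke the conditional fault set property to conclude that the algorithm's signature condition (at least one pair evaluating to $(0,0)$) detects exactly the fault-free case.

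First I would write out, directly from Table~\ref{BGMtable}, the possible values of the ordered pair $(\sigma(u,v_i),\sigma(v_i,u))$ for each joint state of $(u,v_i)$. The crucial observation is that $(0,0)$ occurs if and only if both $u$ and $v_i$ are fault-free: in each of the other three rows of the table (exactly one faulty, or both faulty), the fault-free/faulty edge forces a $1$ in at least one of the two entries, and when both are faulty both entries are $1$. So the condition ``there exists $i$ with $(\sigma(u,v_i),\sigma(v_i,u))=(0,0)$'' is equivalent to ``$u$ has at least one fault-free neighbor $v_i$ and $u$ itself is fault-free''.

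Next I would split on the status of $u$ to verify the two directions of correctness. If $u$ is faulty, then by the observation above no test pair with a neighbor can produce $(0,0)$, so \textbf{CLDA} returns $1$, which is correct. If $u$ is fault-free, then $u\in V(G)-F$, and the conditional fault set assumption $N_G(v)\not\subseteq F$ for every $v\in V(G)-F$ applied at $v=u$ forces the existence of some neighbor $v_i\in N_G(u)-F$. For this index, both $u$ and $v_i$ are fault-free, so $(\sigma(u,v_i),\sigma(v_i,u))=(0,0)$ and \textbf{CLDA} returns $0$, which is again correct. Note that the hypothesis $|F|\leq t$ is used only to keep the setting consistent with the statement of conditional $t$-diagnosability; the argument itself just needs the conditional faulty set property.

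The main obstacle, such as it is, lies not in the logic but in being careful with Table~\ref{BGMtable}: one must check every row to confirm that $(0,0)$ is really exclusive to the both-fault-free case, and in particular that the ``$0$ or $1$'' entries in the two mixed rows cannot conspire to give a spurious $(0,0)$. Once that table-reading step is pinned down, the conditional faulty set hypothesis supplies the fault-free neighbor needed for the converse, and the proof closes immediately.
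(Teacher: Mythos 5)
Your proposal is correct and follows essentially the same route as the paper's own proof: a case split on whether $u\in F$, using Table~\ref{BGMtable} to show that the pair $(\sigma(u,v_i),\sigma(v_i,u))=(0,0)$ can only arise when both endpoints are fault-free, and invoking the conditional faulty set property at $u$ to guarantee a fault-free neighbor in the fault-free case. Your version is in fact slightly more careful than the paper's, which lists $(1,0)$ as a possible outcome when $u$ is faulty even though the table rules it out; this does not affect correctness.
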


\begin{proof}
Depending on the definition of the BGM model and the results listed in Table \ref{BGMtable}, if $u\in F$, test result pair $((u,v_i),(v_i,u))\in \{(0,1),(1,0),(1,1)\}$, where $v_i\in N_{G}(u)$ for every $1\leq i\leq t$. Suppose that $u\notin F$. According to the definition of a conditionally faulty system, there exists at least one node $v_i\in N_{G}(u)$ for some $1\leq i\leq t$ such that $v_i\notin F$. Thus, we have the test result pair $((u,v_i),(v_i,u))=(0,0)$ and the theorem holds.
\end{proof}

\section{A 3-round diagnosis algorithm of hypercube-like networks under the BGM model}\label{sec6}

In this section, we prove that the nodes in an $n$-dimensional hypercube-like network $XQ_n$ can be diagnosed correctly with a faulty set $F$ in three test rounds under the BGM diagnosis model if $|F|\leq n$.

\begin{theorem}\label{main_algo_proof}
Let $XQ_n$ be an $n$-dimensional hypercube-like graph. If $F$ is a faulty set in $XQ_n$ with $|F|\leq n$, then the faulty/fault-free nodes of $XQ_n$ can be identified correctly in three test rounds under the BGM diagnosis model.
\end{theorem}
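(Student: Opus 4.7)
The plan is to run Algorithm \ref{two_rounds} (LDA) in parallel at every vertex of $XQ_n$, after first embedding a diagnosis tree $DT_n(u)$ at each $u$. Since $XQ_n$ is $n$-regular, each root $u$ already has the $n$ children $x_1,\ldots,x_n = N(u)$ required by $DT_n(u)$. Applying Lemma \ref{l3} with $u_1 \leftarrow u$ and $u_{i+1} \leftarrow x_i$ for $1 \le i \le n$ supplies pairwise distinct grandchildren $y_1,\ldots,y_n$ with $y_i \sim x_i$ and $y_i \notin \{u\} \cup N(u)$; hence an embedded $DT_n(u)$ exists inside $XQ_n$ for every $u$.

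Correctness at a single root $u$ is then taken straight from Theorem \ref{lda_proof}. Any $i$ with $\sigma(y_i^u, x_i^u) = 0$ exhibits, via the BGM rule that a $0$ forces the testee fault-free, a fault-free child $x_i^u$; the follow-up test $(x_i^u, u)$ is then reliable and reports $u$'s true status. If instead $\sigma(y_i^u, x_i^u) = 1$ for every $i$, each of the $n$ disjoint pairs $\{x_i^u, y_i^u\}$ contains a faulty node, so $u \in F$ would force $|F| \ge n + 1$, contradicting $|F| \le n$; hence $u$ must be fault-free and no Round 3 test is required at such a root.

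To compress these $2^n$ local two-round procedures into a three-round global one I would schedule the tests so that the grandchild-to-child tests $(y_i^u, x_i^u)$ are distributed over Rounds 1 and 2, and the decisive child-to-root tests $(x_j^u, u)$ occupy Round 3. The idea is to exploit the dimension-labelling of $XQ_n$ described in Section \ref{sec3}, which partitions $E(XQ_n)$ into $n$ perfect matchings, together with the freedom in the choice of each $y_i^u$ left by Lemma \ref{l3}, to break the potential test conflicts between different roots that want to use overlapping edges.

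The main obstacle is precisely this scheduling step. Embedding the local trees is a direct corollary of Lemma \ref{l3}, and the per-vertex correctness analysis is essentially a restatement of Theorem \ref{lda_proof}; what requires genuine work is verifying that the union, over all $2^n$ roots, of the test-edges demanded by the local LDA procedures truly packs into just three parallel rounds with no processor assigned to two tests at once. This is where the recursive construction of $XQ_n$ and the flexibility supplied by Lemma \ref{l3} have to be used in tandem, and where the bulk of the remaining argument would lie.
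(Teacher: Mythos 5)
The embedding of $DT_n(u)$ at each root via Lemma~\ref{l3} and the per-root correctness argument are both fine, but the proof has a genuine gap exactly where you locate it: the claim that all $2^n$ local \textbf{LDA} instances can be packed into three parallel rounds is not established, and the specific schedule you propose is in fact infeasible. Under the test-round convention of this paper, a node may assume only one role per round (tester, testee, or idle), so a single round can contain at most $2^{n-1}$ tests. Now consider the fault-free instance: every test $(y_i^u,x_i^u)$ returns $0$, so every one of the $2^n$ roots reaches the branch of \textbf{LDA} that demands a decisive test $(x_j^u,u)$. Placing all of these in Round~3, as you propose, requires $2^n$ tests in which every node occurs as a testee; since the testers are drawn from the same $2^n$ nodes, some node must simultaneously be a tester and a testee in Round~3. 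So the compression fails already in the easiest case, and no choice of grandchildren permitted by Lemma~\ref{l3} repairs this, because the conflict lives entirely in the decision round. The difficulty is not a routine packing argument left to the reader; it is the reason the per-vertex strategy cannot work as stated.

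The paper's proof takes a different, genuinely global route that sidesteps this. Round~1 tests across the perfect matching between $XQ_{n-1}^0$ and $XQ_{n-1}^1$; by the BGM rule that a $0$-outcome certifies the testee fault-free, this immediately clears all of $V(XQ_{n-1}^1)$ except the flagged set $A$ with $|A|\le n$. The subsequent rounds are then tailored to the structure of $A$ (the cases $|A|\le n-1$, $|A|=n$ not a star, and $A\cong K_{1,n-1}$), using Lemmas~\ref{l1}--\ref{l3} to find matchings from certified nodes onto the suspect sets so that every remaining node receives one reliable test. In effect each node is diagnosed by a single incoming reliable test rather than by its own diagnosis tree, which is what makes three rounds suffice. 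If you want to salvage your approach, you would need to abandon the rigid ``all decision tests in Round~3'' structure and argue instead that every node can be made the testee of a reliable test within three rounds --- at which point you will have rediscovered the paper's matching-based argument.
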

\begin{proof}
Let the edges in $XQ_n$ labeled by $n$ be the perfect matching. We give the algorithm {\bf DHL} (Diagnosis for Hypercube-like graph) in Algorithm \ref{main_algo} to identify the faulty/fault-free status of the nodes in $XQ_n$ under the BGM diagnosis model. In the first test round, let the nodes in $XQ_{n-1}^{0}$ test the nodes in $XQ_{n-1}^{1}$. We set $A=\{v_i\mid \sigma(\bar{v_i},v_i)=1,$ where $\bar{v_i}\in V(XQ_{n-1}^{0})$ and $v_i\in V(XQ_{n-1}^{1})\}$. We consider the following cases.

\begin{algorithm}\small \label{main_algo}
    \caption{{\bf DHL}$(XQ_n)$}
    \KwIn{A hypercube-like network $XQ_n$.}
    \KwOut{The node in $XQ_n$ is fault-free or faulty.}
    \Begin{

        The nodes in $XQ_{n-1}^{0}$ test the nodes in $XQ_{n-1}^{1}$.  /* Test round 1. */

        {\bf set} $A=\{v_i\mid \sigma(\bar{v_i},v_i)=1,$ where $\bar{v_i}\in V(XQ_{n-1}^{0})$ and $v_i\in V(XQ_{n-1}^{1})\}$.

        \If{$A$ is not isomorphic to $K_{1,n-1}$}{

            \lIf{$|A|\leq n-1$}{\Return{{\bf DHLA}$(XQ_n,A)$}}
            \lElse{\Return{\bf DHLB}$(XQ_n,A)$}
        }
        \Else{
            {\bf set} $x$ being the node in $A$ such that $N_{XQ_{n-1}^{1}}(x)=A-\{x\}$.

            {\bf set} $C'$ being the nodes in $V(XQ_{n-1}^{1})-A$ such that $A-\{x\}$ and $C'$ form a perfect matching in $XQ_{n-1}^{1}$.

            \Return{\bf DHLC}$(XQ_n,A,x,C')$
        }

    }
\end{algorithm}

\begin{algorithm}\small \label{main_algoA}
    \caption{{\bf DHLA}$(XQ_n,A)$}
    \KwIn{A hypercube-like network $XQ_n$. A set $A=\{v_i\mid \sigma(\bar{v_i},v_i)=1,$ where $\bar{v_i}\in V(XQ_{n-1}^{0})$ and $v_i\in V(XQ_{n-1}^{1}),1\leq i\leq n\}$ with $|A|\leq n-1$.}
    \KwOut{The node in $XQ_n$ is fault-free or faulty.}
    \Begin{

        The nodes in $XQ_{n-1}^{1}$ test the nodes in $XQ_{n-1}^{0}$. /* Test round 2. */

        {\bf set} $B=\{\bar{v_i}\mid \sigma(v_i,\bar{v_i})=1,$ where $\bar{v_i}\in V(XQ_{n-1}^{0})$ and $v_i\in V(XQ_{n-1}^{1})\}$.

        {\bf set} $C$ being the nodes in $V(XQ_{n-1}^{1})-A$ such that $A$ and $C$ form a perfect matching in $XQ_{n-1}^{1}$.

        \lIf{$|A|=0$}{The nodes in $B$ are faulty, and the others are fault-free}

        \ElseIf{$1\leq |B|\leq n-1$}{

            {\bf set} $D$ being the nodes in $V(XQ_{n-1}^{0})-B$ such that $B$ and $D$ form a perfect matching in $XQ_{n-1}^{0}$.

            The nodes in $C$ test the nodes in $A$, and the nodes in $D$ test the nodes in $B$. /* Test round 3. */

            {\bf set} $A_1=\{v_i\mid \sigma(c_i,v_i)=1,$ where $c_i\in C$ and $v_i\in A\}$.

            {\bf set} $B_1=\{\bar{v_i}\mid \sigma(d_i,\bar{v_i})=1,$ where $d_i\in D$ and $\bar{v_i}\in B\}$.

            \Return The nodes in $A_1$ or $B_1$ are faulty, and the others are fault-free.

        }

        \Else{The nodes in $C$ test the nodes in $A$. /* Test round 3. */

            {\bf set} $A_1=\{v_i\mid \sigma(c_i,v_i)=1,$ where $c_i\in C$ and $v_i\in A\}$.

            \Return The nodes in $A_1$ are faulty; for every node $\bar{v_i}\in B$, $\bar{v_i}$ is faulty if $v_i$ is not in $A_1$; the others are fault-free.

        }
    }
\end{algorithm}

\eject
{\bf Case 1:} Suppose that $A$ is not isomorphic to $K_{1,n-1}$. We consider the following cases. \medskip

{\bf Subcase 1.1:} Suppose that $|A|\leq n-1$. We give the algorithm {\bf DHLA} in Algorithm \ref{main_algoA} to identify the faulty/fault-free status of the nodes in $XQ_n$. In the second test round, let the nodes in $XQ_{n-1}^{1}$ test the nodes in $XQ_{n-1}^{0}$. We set $B=\{\bar{v_i}\mid \sigma(v_i,\bar{v_i})=1,$ where $\bar{v_i}\in V(XQ_{n-1}^{0})$ and $v_i\in V(XQ_{n-1}^{1})\}$. By Lemma \ref{l1}, there exists a node set $C$ in $XQ_{n-1}^{1}-A$ such that $A$ and $C$ form a perfect matching in $XQ_{n-1}^{1}$. We consider the following cases.

\medskip
{\bf Subcase 1.1.1:} Suppose that $|A|=0$. Under the BGM diagnosis model, the nodes in $B$ are faulty, and the others are fault-free.

\medskip
{\bf Subcase 1.1.2:} Suppose that $1\leq |A|\leq n-1$ and $1\leq |B|\leq n-1$. By Lemma \ref{l1}, there exists a node set $D$ in $XQ_{n-1}^{0}-B$ such that $B$ and $D$ forms a perfect matching in $XQ_{n-1}^{0}$. Under the BGM diagnosis model, the nodes in $C\cup D$ is fault-free. In the third test round, let the nodes in $C$ test the nodes in $A$, and let the nodes in $D$ test the nodes in $B$. We set $A_1=\{v_i\mid \sigma(c_i,v_i)=1,$ where $c_i\in C$ and $v_i\in A\}$ and $B_1=\{\bar{v_i}\mid \sigma(d_i,\bar{v_i})=1,$ where $d_i\in D$ and $\bar{v_i}\in B\}$. Since the nodes in $C\cup D$ is fault-free, the tests performed by the nodes in $C\cup D$ are reliable. Thus the nodes in $A_1$ or $B_1$ are faulty, and the others are fault-free. \medskip

{\bf Subcase 1.1.3:} Suppose that $1\leq |A|\leq n-1$ and $|B|=n$. In the third test round, let the nodes in $C$ test the nodes in $A$. We set $A_1=\{v_i\mid \sigma(c_i,v_i)=1,$ where $c_i\in C$ and $v_i\in A\}$. Since $|B|=n$, $F\subseteq A\cup B$. The tests performed by the nodes in $C$ are reliable. Thus the nodes in $A_1$ are faulty; for every node $\bar{v_i}\in B$, $\bar{v_i}$ is faulty if $v_i$ is not in $A_1$; the others are fault-free. \medskip

\begin{algorithm}[!b]\label{main_algoB}
    \caption{{\bf DHLB}$(XQ_n,A)$}
    \KwIn{A hypercube-like network $XQ_n$. A set $A=\{v_i\mid \sigma(u_i,v_i)=1,$ where $u_i\in V(XQ_{n-1}^{0})$ and $v_i\in V(XQ_{n-1}^{1})\}$ with $|A|=n$.}
    \KwOut{The node in $XQ_n$ is fault-free or faulty.}
    \Begin{
            {\bf set} $C$ being the nodes in $V(XQ_{n-1}^{1})-A$ such that $A$ and $C$ form a perfect matching in $XQ_{n-1}^{1}$.

            The nodes in $C$ test the nodes in $A$. /* Test round 2. */

            {\bf set} $A_0=\{v_i\mid \sigma(c_i,v_i)=0,$ where $c_i\in C$ and $v_i\in A\}$.

            {\bf set} $A_1=\{v_i\mid \sigma(c_i,v_i)=1,$ where $c_i\in C$ and $v_i\in A\}$.

            \Return The nodes in $A_1$ are faulty; for every node $\bar{v_i}$, $\bar{v_i}$ is faulty if $v_i$ is in $A_0$; the others are fault-free.
    }
\end{algorithm}

{\bf Subcase 1.2:} Suppose that $|A|=n$. We give the algorithm {\bf DHLB} in Algorithm \ref{main_algoB} to identify the faulty/fault-free status of the nodes in $XQ_n$. By Lemma \ref{l2}, there exists a node set $C$ in $XQ_{n-1}^{1}-A$ such that $A$ and $C$ forms a perfect matching in $XQ_{n-1}^{1}$. In the second test round, let the nodes in $C$ test the nodes in $A$. We set $A_0=\{v_i\mid \sigma(c_i,v_i)=0,$ where $c_i\in C$ and $v_i\in A\}$ and $A_1=\{v_i\mid \sigma(c_i,v_i)=1,$ where $c_i\in C$ and $v_i\in A\}$. Since $|A|=n$, $F\subseteq A\cup B$. The tests performed by the nodes in $C$ are reliable. Thus the nodes in $A_1$ are faulty; for every node $\bar{v_i}$, $\bar{v_i}$ is faulty if $v_i$ is in $A_0$; the others are fault-free. \medskip

\begin{algorithm}[!b]\label{main_algoC}
    \caption{{\bf DHLC}$(XQ_n,A,x,C')$}
    \KwIn{A hypercube-like network $XQ_n$. A set $A=\{v_i\mid \sigma(u_i,v_i)=1,$ where $u_i\in V(XQ_{n-1}^{0})$ and $v_i\in V(XQ_{n-1}^{1})\}$ with $A$ being isomorphic to $K_{1,n-1}$. A node $x\in A$ with $N_{XQ_{n-1}^{1}}(x)=A-\{x\}$. A set $C'\in V(XQ_{n-1}^{1})-A$, where $A-\{x\}$ and $C'$ form a perfect matching in $XQ_{n-1}^{1}$.}
    \KwOut{The node in $XQ_n$ is fault-free or faulty.}
    \Begin{

            The nodes in $C'$ test the nodes in $A-\{x\}$. /* Test round 2. */

            {\bf set} $A_0=\{v_i\mid \sigma(c_i,v_i)=0,$ where $c_i\in C'$ and $v_i\in A-\{x\}\}$.

            {\bf set} $A_1=\{v_i\mid \sigma(c_i,v_i)=1,$ where $c_i\in C'$ and $v_i\in A-\{x\}\}$.

            \If{$|A_0|\neq 0$}{

                {\bf set} $y$ being a node in $A_0$.

                The node $y$ tests the node $x$. /* Test round 3. */

                \If{$\sigma(y,x)=0$}{
                     \Return $\bar{x}$ is faulty; the nodes in $A_1$ are faulty; for every node $\bar{v_i}$, $\bar{v_i}$ is faulty if $v_i$ is in $A_0$; the others are fault-free.
                }
                \Else{
                    \Return $x$ is faulty; the nodes in $A_1$ are faulty; for every node $\bar{v_i}$, $\bar{v_i}$ is faulty if $v_i$ is in $A_0$; the others are fault-free.
                }

            }
            \Else{
                {\bf set} $z$ being a node in $XQ_{n-1}^{0}-\{\bar{x}\}$.

                The node $z$ tests the node $\bar{x}$. /* Test round 3. */

                \If{$\sigma(z,\bar{x})=0$}{
                     \Return $x$ is faulty; the nodes in $A_1$ are faulty; for every node $\bar{v_i}$, $\bar{v_i}$ is faulty if $v_i$ is in $A_0$; the others are fault-free.
                }
                \Else{
                    \Return $\bar{x}$ is faulty; the nodes in $A_1$ are faulty; for every node $\bar{v_i}$, $\bar{v_i}$ is faulty if $v_i$ is in $A_0$; the others are fault-free.
                }
            }

    }
\end{algorithm}

{\bf Case 2:} Suppose that $A$ is isomorphic to $K_{1,n-1}$. Let $x$ be the node in $A$ such that $N_{XQ^{1}_{n-1}}(x)=A-\{x\}$. By Lemma \ref{l3}, there exists a node set $C'$ in $V(XQ_{n-1}^{1})-A$ such that $A-\{x\}$ and $C'$ form a perfect matching in $XQ_{n-1}^{1}$. We give the algorithm {\bf DHLC} in Algorithm \ref{main_algoC} to identify the faulty/fault-free status of the nodes in $XQ_n$. In the second test round, let the nodes in $C'$ test the nodes in $A-\{x\}$. Since $|A|=n$, $F\subseteq A\cup B$. The tests performed by the nodes in $C'$ are reliable. Thus the nodes in $A_1$ are faulty; for every node $\bar{v_i}$, $\bar{v_i}$ is faulty if $v_i$ is in $A_0$; the nodes in $A_0$ are fault-free. For the faulty/fault-free status of the node $x$, we consider the following cases.

\medskip
{\bf Subcase 2.1:} Suppose that $|A_0|\neq 0$. Let $y$ be a node in $A_0$. In the third test round, let $y$ test $x$. If $\sigma(y,x)=0$, $x$ is fault-free, and $\bar{x}$ is faulty. If $\sigma(y,x)=1$, $x$ is faulty, and $\bar{x}$ is fault-free.

\medskip
{\bf Subcase 2.2:} Suppose that $|A_0|=0$. Let $z$ be a node in $V(XQ^{0}_{n-1})-\{\bar{x}\}$. In the third test round, let $z$ test $\bar{x}$. Since $F\subseteq A\cup B$, the test performed by $z$ is reliable. If $\sigma(z,\bar{x})=0$, $\bar{x}$ is fault-free, and $x$ is faulty. If $\sigma(z,\bar{x})=1$, $\bar{x}$ is faulty, and $x$ is fault-free.
\end{proof}

\section{Concluding remarks}\label{sec7}

In this paper, a diagnosis testing signal is supposed to be delivered from one node to another node through the communication bus at one time. The node is not allowed to perform multiple tests simultaneously. There are many paired tests that can be performed parallel in a test round. Each node can only have one of the following state in a round, testing, being tested, and not participating in any testing. In \cite{Teng2014}, Teng and Lin discussed the local diagnosability of a $t$-diagnosable system under the PMC diagnosis model. They proved that any reliable diagnosis algorithm should be completed in at least three test rounds under the PMC model. The PMC model is a more general diagnosis model, and the BGM model can be regarded as a special case of the PMC model. The definitions of the two diagnosis models are different. In this paper, we propose an algorithm for determining the fault status of a node in a $t$-diagnosable system with the structure $t$-diagnosis-tree under the BGM model, and we prove that the diagnosis can be completed in two test rounds. We also give an algorithm for conditional local diagnosis under the BGM model. The structure $t$-diagnosis-tree can be embedded in many well-known interconnection networks of multiprocessor systems; for instance, hypercubes, star graphs, and arrangement graphs. We give an algorithm for determining the fault status of a node in a hypercube-like network. Suppose that $XQ_n$ is an $n$-dimensional hypercube-like network, and $F$ is a faulty set in $XQ_n$ with $|F|\leq n$. We prove that the fault status of nodes in $XQ_n$ can be identified in three test rounds under the BGM diagnosis model. To perform our algorithm, we have to find the perfect matching in the hypercube-like graph. Finding a perfect matching can be solved in polynomial time by the algorithm of Edmonds \cite{Edmonds65}. Therefore, with our algorithm, the diagnosis can be completed in polynomial time. Future research will endeavor to determine specific structures for existing practical interconnection networks, design an efficient diagnosis algorithm, and prove the diagnosability of this useful structure under the BGM diagnosis model.

\section*{Acknowledgements}
This work was supported in part by the Ministry of Science and Technology of the Republic
of China under Contract MOST 109-2221-E-126-004.


\end{document}